\documentclass[lettersize,journal]{IEEEtran}
\usepackage{booktabs} 
\usepackage[ruled]{algorithm2e} % For algorithms
\usepackage{amsthm}

\usepackage[english]{babel}
\usepackage{blindtext}
\usepackage{color}
\usepackage{xspace}
\usepackage{caption}
\usepackage{multirow}
\usepackage{amsmath, amssymb}  
\usepackage{graphicx}
\usepackage{gensymb}
\usepackage{makecell}
\usepackage{enumerate}
\usepackage{circledsteps}
\usepackage{pifont}
\usepackage{arydshln}

\usepackage{subfigure}
\usepackage[compact]{titlesec}
\usepackage{balance}
\usepackage{tikz}
\usepackage{listings}
\usepackage{makecell, tabularx}
\usepackage{autonum}
\usepackage{cleveref}
\usepackage{amsmath}

\SetAlFnt{\small}
\SetAlCapFnt{\small}
\SetAlCapNameFnt{\small}
\SetAlCapHSkip{0pt}
\IncMargin{-\parindent}

\newcommand{\ie}{\emph{i.e.,}\xspace}
\newcommand{\eg}{\emph{e.g.,}\xspace}
\newcommand{\etal}{\emph{et al.}\xspace}

\newtheorem{theorem}{Theorem}

\begin{document}
\title{NFT Games: an Empirical Look into the Play-to-Earn Model}

\author{Yixiao Gao,
        Fei Li,
        Ruizhe Shi,
        Ruizhi Cheng,
        Jean Zhang,
        Bo Han,
        Songqing Chen
\thanks{Y. Gao. F. Li, R. Shi, R. Cheng, B. Han, and S. Chen are with the Department of Computer Science, 
George Mason University, Fairfax, VA 22030, the United States.}
\thanks{J. Zhang is with the School of Business, 
Virginia Commonwealth University, Richmond, VA 23284, the United States.}
}

\maketitle

\begin{abstract}
The past decade has witnessed the burgeoning and continuous development of blockchain and its applications. Besides various cryptocurrencies, an industry that has quickly embraced this trend is gaming. Thanks to the support of blockchain, games have started to incorporate non-fungible tokens (NFTs) that can enable a {i.e., new gaming model}, play-to-earn (P2E), which incentivizes users to participate and play. While recent studies looked at several NFT games qualitatively and individually, an in-depth understanding is still missing, particularly on how the P2E model has transformed traditional games. In this work, we set to conduct a measurement study of NFT games, aiming to gain a comprehensive understanding of the effectiveness of P2E in practice. For this purpose, we collect and analyze relevant NFT transaction data from the underlying blockchain (e.g., Ethereum) of 12 games, supplemented with various data scraped from their websites. Our study shows that (1) a few top wallets control unproportionally high percentage of NFTs, and the majority of wallets own only one or two NFTs and do not actively trade; (2) promotion events do boost the trade amount and the NFT price for some games, but their effect does not sustain; and (3) few players actually earned a profit, and players in 9 out of 12 games who traded NFTs have a negative profit on average. 
Motivated by these findings, we further investigate effective incentive mechanisms based on game theory to improve the trading profits that players can earn from these NFT games. Both modeling and simulation results confirm the effectiveness of the proposed incentive mechanism. 
\end{abstract}

\begin{IEEEkeywords}
Non-Fungible Token (NFT), Play-to-Earn (P2E)
\end{IEEEkeywords}

\section{Introduction}

Over the last decade, the token-based economy has significantly evolved in the form of cryptocurrencies and decentralized applications (DApps) \cite{bitcoin2008,ethereumwhitepaper}. The inception of {\em Metaverse}-related concepts further created a need for digital assets that are compatible with the token-based economy, thus popularizing the market of non-fungible tokens (NFTs)~\cite{borri2022economics}. A report states that the NFT Market size was valued at USD 26.9 Billion in 2024~\cite{report}. NFTs encapsulate the digital representation of assets such as artwork or game props\footnote{Game props refer to a 3D representation of an object, typically used in online gaming.}, and they are transferred over the publicly verifiable and immutable blockchain ledger. 

NFTs have garnered notable attention since  
the spike of CryptoKitties \cite{cryptokitties} in 2017. As a result, various NFT marketplaces have emerged that enable NFT trading on their platforms. These marketplaces also include gaming markets that use NFT tokens as gaming assets. Since July 2020 \cite{6boonparn2022social}, the gaming industry has generated the highest NFT trading volume, accounting for $\sim$51\% of the industry usage of blockchain as of August 2022 \cite{430-3}.

The prosperity of NFT games can be largely attributed to the underlying blockchain, which enables a new business model called play-to-earn (P2E).  P2E takes advantage of virtual currency and the potential appreciation of NFT tokens. In P2E, players can convert profits into virtual currency on the blockchain by selling NFTs or by using the attributes of NFTs in the game to achieve value growth. For example, in CryptoKitties, Kitties (NFTs) can be used to breed more Kitties that can be sold as cryptocurrencies, %and then 
which are then converted into currency in the real world. Therefore, the support of blockchain enables a new business model (P2E) that allows users to {\em earn} money on gaming platforms in a more transparent and user-owned way. P2E is distinctly different from the legacy gaming models such as pay-to-play (P2P) where users typically {\em spend} money on the gaming platform. As a result, P2E has fundamentally transformed the business model of games, providing better incentives to players and great potential for the game industry.

All these factors highlight the growing importance of NFT games in the token-based economy. As such, it warrants a formal analysis to understand the key characteristics of NFT games and their business model. We note that there are limited efforts to study the anatomy of NFT games at scale, especially within the context of their incentive mechanisms. Notable prior works~\cite{5vidal2022new,7francisco2022perception,alam2022understanding} mainly focus on the social media footprint of NFT games or the characteristics of one particular game ({\em\eg} Axie Infinity). Therefore, the research gap lies in studying the NFT game ecosystem at scale with an emphasis on its evolving business model. Closing this gap requires a comprehensive measurement study across the aforementioned frontiers to fully understand the evolving trends in NFT games and the token-based economy at large. 

In this paper, we make the attempt to close this research gap with various NFT games. For this purpose, we conduct a measurement study across 12 NFT games (all on Ethereum but one from Ronin) since their first debut to quantitatively understand the P2E model. Our analysis and findings are summarized below as key contributions. 

\begin{itemize}

\item \textbf{Ownership Dominance of NFTs.} Few top wallets own a significant portion of NFTs, while the other wallets often own one or several NFTs, and they do not actively conduct NFT trading. 

\item \textbf{Varying Effect of Promotion Events.} The promotion events do boost more transactions (e.g., a day's transaction is up to 27\% of total trades) and higher NFT prices for most games, but the effect of promotion events does not sustain. 

\item \textbf{Few Players Earned.} Few players actually earned profit, where in 9 (out of 12) games the players who traded NFTs have a negative profit on average.

\end{itemize}

By fitting the empirical data, we propose a simple  model to characterize the profitability of these NFT games. Users can utilize this model to assess the  risk before participating in a particular NFT game in the future.

As our findings show that the current P2E model does not work well for users, we are inspired to investigate if proper incentive mechanisms could help improving the trading profits of the NFTs in these games. Accordingly, we abstract the player activities in the NFT games and propose an incentive mechanism model based on game theory. Bot the modeling and simulation results show that our proposed appropriate incentive mechanism can significantly improve the profits that players earn from the NFT games. 

\textbf{Ethical Consideration.} All the data collected and analyzed in this study are public. The wallets are pseudonymous, and no individual wallet addresses are identified or presented in this paper. Our analysis is conducted at an aggregated level and no potential deanonymization. Therefore, there is no ethical issue. 

The rest of the paper is organized as follows. We present the background information in Section~\ref{sec:background}. We present an overview and our detailed analysis of P2E in Section~\ref{sec:methodology} and Section~\ref{sec:p2e}, respectively. We abstract the current NFT P2E in Section \ref{sec:extend}. We investigate the incentive mechanism in Section \ref{sec:tla}, and make concluding remarks in Section~\ref{sec:conclusion}.

\section{Background}
\label{sec:background}

In this section, we provide a brief background about the NFT gaming platforms. In particular, we discuss the token types used in NFT gaming platforms. 

NFTs (Non-Fungible Tokens)~\cite{nadini2021mapping} are considered digital assets whose ownership certificates and/or data are stored on the blockchain. Physical assets in the real world (\eg vehicles, artworks, and shoes) or digital assets in the real or virtual world (\eg music, video, images, and game props) can be tokenized as NFTs. As player-to-player trading is a significant part of the game community, the advantages of blockchains are noticed by the gaming industry. By tokenizing the game props and in-game assets to NFTs, all the related ownership certificates on the blockchains become immutable; and all the transaction history is retained distributive.

Before introducing NFTs, Ethereum first released a token standard called "fungible tokens" (ERC-20)~\cite{erc20}, which is similar to Bitcoin~\cite{bitcoin2008}, with additional smart contract compatible functionalities. In 2018, Ethereum introduced the Non-Fungible Token Standard (ERC-721)~\cite{erc721} that popularized the concept of NFTs and was also adopted by the gaming community. However, game developers soon found that ERC-721 does not cater to all gaming requirements. For instance, in a gaming area, there are multiple game props (\eg skin, and weapons) that are non-fungible as an item but fungible among the copies of the same item. Moreover, ERC-721~\cite{erc721} mandates each NFT on a separate transaction when being transferred, which adds costs and reduces efficiency. To overcome such limitations, Ethereum then introduced ERC-1155~\cite{erc1155} that combines the ERC-20~\cite{erc20} and ERC-721~\cite{erc721} standards and allows multiple tokens storage within one contract. 

At present, NFT games use both ERC-721 and ERC--1155 standards in their games. This is to be noted that there is no standardized term that specifies the characteristics of an NFT game. In fact, several terms are used in the community to refer to NFT games, including NFT games~\cite{7francisco2022perception}~\cite{6boonparn2022social}, NFT-based games~\cite{430-1}, play-to-earn games~\cite{4delfabbro2022understanding}, and blockchain games~\cite{alam2022understanding}. For the ease of analysis, we standardize the term NFT games based on the key characteristics common among the all terms that are broadly used for a gaming platform that supports NFT tokens. In the following, we enumerate standardized specifications of a gaming platform which we then use as blueprint to define an NFT game. 

\begin{enumerate}
        \item The gaming platform supports at least one type of NFT token including ERC-721 or ERC-1155. 
        \item NFT tokens used on the platform are tradable, and the transaction data is recorded on the blockchain.
        \item NFT interaction is a mandatory part of the gaming experience. Players cannot bypass the NFT token interactions during the game. 
        \item The gaming platform provisions at least one form of token incentive to the players which may be in the form of NFTs or other cryptocurrency tokens. 
\end{enumerate}

We use the aforementioned specifications to define an NFT game, and follow them during our data collection process and analysis. Note that specification (4) states that gaming platforms support incentive payouts in the form of NFT or legacy tokens. By taking a closer look at this specification across real-world NFT games, we found that NFT games have two main token types as part of the incentive process. 

\begin{itemize}
    \item \textbf{Property Tokens}: Property tokens refer to digital assets on the gaming platform that follow ERC-721~\cite{erc721} and ERC-1155~\cite{erc1155} standards on Ethereum or other blockchains. As mentioned above, NFTs usually represent unique digital assets in the game; assets vary based on the game context, such as in-game props, characters, cards, virtual land, and artworks.
    
    \item \textbf{Utility Tokens}: Utility tokens represent the fungible tokens (\ie ERC-20 on Ethereum or identical tokens on other chains) in the game. Utility tokens usually perform functions such as currency, materials, and votes. Specifically, utility tokens with voting power are also known as \emph{governance tokens}. Early NFT games directly used Ether as a utility token. As the gaming content gradually evolved, Ether could not satisfy all requirements expected from a utility token. Therefore, gaming companies standard developing their own utility token, typically following the ERC-20 standard. 
\end{itemize}

NFT games often employ a play-to-earn (P2E) economic model to incentivize the players to participate.  Compared to the pay-to-play and free-to-play in traditional games, players in NFT games can freely trade NFTs and thus have the potential to make profits while playing. This is different from traditional games where the trading of gaming assets is largely restricted by game companies, while there is no guarantee on private transactions between players. Thus, play-to-earn is expected to fundamentally transform traditional games and boost the game industry. 
\section{Overview of our Measurement Study}
\label{sec:methodology}

In this section, we present our methodology and a summary of the real-world data that we have collected to investigate the effectiveness of the play-to-earn model.

\subsection{Methodology}
 
Since all the NFT transaction data is on the blockchains, the data is publicly available for us to collect. The transaction data until 3/31/2023 is retrieved directly from our Ethereum node, while we expand the period of the data set to 6/30/2023 and download the additional data from the block explorer, Etherscan~\cite{etherscan}. The data of Axie Infinity is on the Ronin~\cite{ronin} sidechain, and we collected ERC-721~\cite{erc721} trade history through the REST API~\cite{roninAPI}. To convert the trade price to USD, we retrieved the daily exchange rate between the ERC-20 tokens to USD and ETH/WETH to USD from Yahoo! Finance~\cite{yahoo}. 

We collected the data from Day 1 of each game. Thus 
the start dates vary for different games since each game was launched at a different time. The end date of all the blockchain data we retrieved is 06/30/2023 except for Axie Infinity, for which we only have its data till 03/31/2023 
since they stopped REST API support for data access.  

After we convert the NFT trading value into USD, we track the trading history of every NFT from the first day it was minted. Since the development cost and the supply number vary from different NFT collections, calculating the original cost of NFTs when minted on the blockchain is complex and hard to quantify. To simplify the problem, 
we set the initial cost when the game developer mints the NFT as 0. Thus, for each NFT trade, we set the trading price of NFT in the previous transaction as the cost to calculate the profit or loss. NFTs, as the game props, are sold from the game developer teams and are then circulated between players. Thus, we consider the first trade of an NFT as the sale from the game developer team and the second and all the subsequent trades of the NFT as the trades between the players.
Based on the trading history of each NFT, we can explore the profit and loss from each NFT's perspective.

\subsection{Data Overview}

\begin{table}
\footnotesize
\centering
\caption{Summary of Ethereum-based NFT Games \normalfont{(Axie Infinity is on Ronin, a side chain of Ethereum)}. \normalfont{In the table, NFT Holders refer to the total number of wallets. Since NFTs of both Decentraland and Sandbox are named LAND, we use {\bf LANDD} and {\bf LANDS} to differentiate them in our study.}}
\label{tab:games_eth}
\begin{tabular}{c|c|c|c}
Game (NFT) & Transactions & \makecell{NFT \\ Total Supply} & NFT Holders \\ \hline
Benji Bananas (BENJI) & 12,729 &5,000 & 2,937 \\\hline
Blankos Block Party (BLNKS) & 4,863 & 4,870 & 1,417 \\\hline
CryptoKitties (CK) & 5,620,779 &2,022,717 & 120,334 \\\hline
Decentraland (LANDD) & 216,073 & 92,598 & 5,699 \\\hline
Ethermon (EMONA) & 58,330 & 56,536 & 5,640 \\\hline
F1 Delta Time (F1DTI) & 84,685 & 40,572 & 1,438 \\\hline
League of Kingdoms (LOKL) & 16,107 & 7,817 & 2,030 \\\hline
League of Kingdoms (LOKR) & 5,295 & 1,027 & 737 \\\hline
My Crypto Heroes (MCHH) & 147,448 & 20,997 & 1,010 \\\hline
Sorare (SOR) & 1,193,780 & 336,353 & 16,745 \\\hline
Spider Tanks (TANKS) & 8,919 & 653 & 368 \\\hline
Sandbox (LANDS) & 224,138 & 76,080 & 15,988 \\\hline
Axie Infinity
(AXIE) & 438,418,662 & 7,334,880 & 1,949,476\\
\end{tabular}
\vspace{-20pt}
\end{table}

With all the data collected, 
Table~\ref{tab:games_eth} summarizes the NFT token contract transaction statistics of the NFT games. Note that all activities on the blockchain are executed via transactions. Transactions include not only NFT trades but also NFT minting, auction, bidding, etc. We collected the NFT trades from the token smart contracts of these NFT games.

In Table~\ref{tab:games_eth}, \textbf{Decentraland}~\cite{decentraland}, \textbf{League of Kingdoms}~\cite{leagueofkingdoms}, and \textbf{Sandbox}~\cite{sandbox} are virtual land and metaverse. Virtual lands and their accessories such as virtual buildings and virtual decoration products are NFTs in these games, however, virtual lands would be the most significant one since other NFTs are dependent on the lands. Note \textbf{League of Kingdoms} has two NFT collections in the game but their functionalities in the game are independent. Thus, we treat each as a game and analyze them separately.

\textbf{Axie Infinity}~\cite{Axie}, \textbf{Benji Bananas}~\cite{benjibananas}, \textbf{Blankos Block Party}~\cite{blankos}, \textbf{Ethermon}~\cite{ethermon}, \textbf{F1 Delta Time}~\cite{F1DeltaTime}, and \textbf{My Crypto Heroes}~\cite{mycryptoheroes} are games where players control and upgrade their NFTs and battle with the NFTs of others. \textbf{CryptoKitties} and \textbf{Sorare}~\cite{sorare} are games where players cannot change the attributes of their NFTs but can use the NFTs to play functions or battles.  Different from all the above games, \textbf{Axie Infinity} uses Ronin, a side chain of Ethereum. 
A more detailed description of these games is in the Appendix~\ref{sec:gameintro}.

In this table, the number of transactions indicates how active the game interacts with the blockchain, and the total supply of NFTs indicates the rarity of the NFTs. The number of NFT holders reflects the interests of the players who participate in the game. Since if players would not like to stay in the game anymore, they are more willing to sell their NFTs.  
\section{Play-to-earn (P2E) Analysis}
\label{sec:p2e}

In this section, we focus on the play-to-earn (P2E) model by analyzing its operating principle and obtaining the real NFT transaction data to examine whether this P2E model is effective (i.e., do players really earn in these games?).

Literally, P2E means that players can earn benefits only through the processing of playing the game without extra effort. Tokens (either NFTs or fungible tokens) are the rewards that attract players to stay or participate in the management of the decentralized ecosystem. However, tokens are not typically given for free. There are different ways where players can earn tokens through their efforts or participation, while the majority of earnings require investment. Based on whether the players are required to invest before earning, we classify the profit-making approaches into two categories: earn-with-payment and earn-without-payment.

For earn-without-payment, we find that  players can earn limited incomes in tokens without any money out-of-pocket in the NFT games, and the amount of income is limited, unstable, and unpredictable. When compared to the earn-with-payment, its impact on profit is trivial. We provide a detailed analysis of earn-without-payment in Appendix~\ref{sec:earn-wo-payment} and focus on the earn-with-payment analysis in the following. 

\subsection{Few top wallets control significant portions of NFTs; majority of wallets only own one NFT and do not actively trade} 

To study the P2E, first, we examine the NFT distribution among players. That is, to understand how NFTs are distributed among players and how actively they trade NFTs.
For this purpose, we focus on the user transaction activities in games on the Ethereum. In this analysis, we assume a single wallet representing a user\footnote{We will use wallet and player/user interchangeably.}. Given that a user may own multiple wallets, our analysis reflects the lower bounds regarding the NFTs owned by individual users. 

\begin{figure*}[!h]
\captionsetup[figure]{font=small}
\centering
\begin{minipage}{.32\textwidth}
\centering
\captionsetup{font=small}
\includegraphics[width=.9\linewidth]{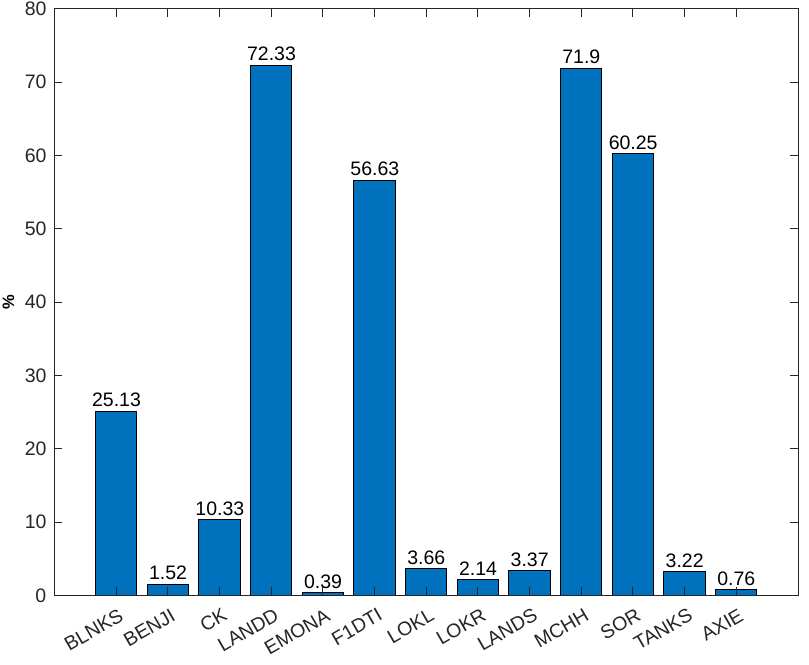}
\vspace{-10pt}
\caption{The largest percentage of NFTs owned by a single wallet \normalfont{(as of 06/30/23)}}
\label{fig:most_hold}
\end{minipage}
\hfill
\begin{minipage}{.32\textwidth}
  \centering
  \captionsetup{font=small}
\includegraphics[width=.9\linewidth]{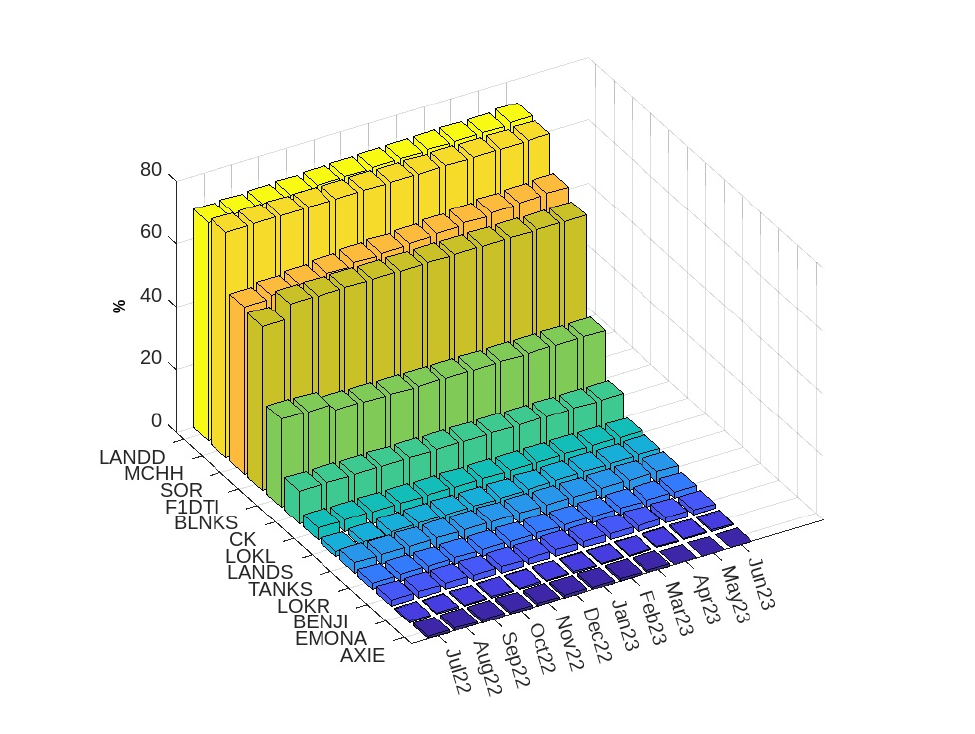}
\vspace{-10pt}
\caption{The largest percentage of NFTs owned by a single wallet \normalfont{(change over time)}}
%address in the Ethereum NFT games}
\label{fig:most_hold_3d}
\end{minipage}
\hfill
\begin{minipage}{.32\textwidth}
\centering
\captionsetup{font=small}
\includegraphics[width=.9\linewidth]{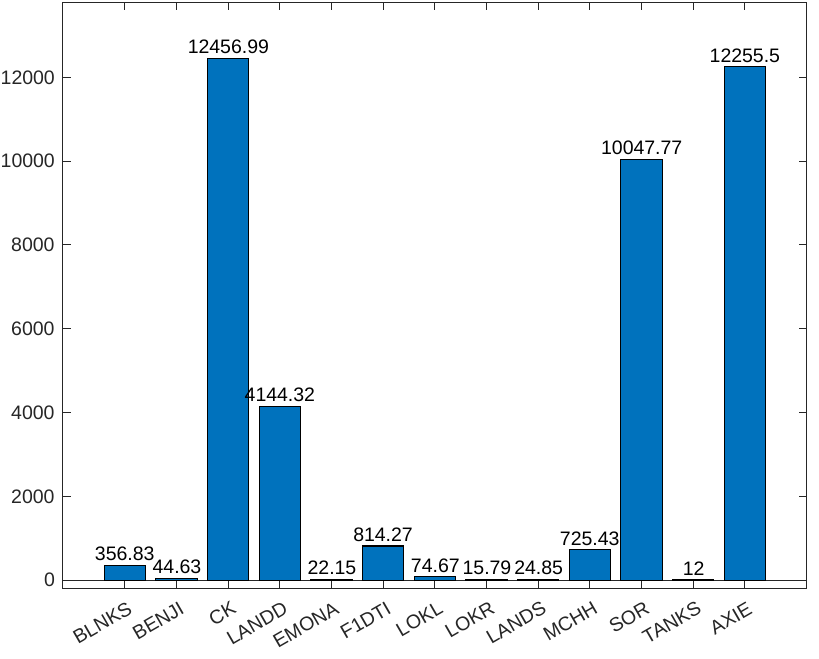}
\vspace{-10pt}
\caption{The ODI in the games \normalfont{(as of 06/30/23)}}
\label{fig:odi_recent}
\end{minipage}
\vspace{-10pt}
\end{figure*}

Figure~\ref{fig:most_hold} shows the largest percentage of NFTs held by a single Ethereum wallet in these NFT games as of 06/30/2023. The higher the percentage, the more concentration of the NFTs, and the stronger the power of this single wallet, because the NFTs contain the voting power in the NFT games. We can see that NFTs of LANDD (Decentraland), MCHH (My Crypto Heroes), SOR (Sorare), and F1TDI (F1 Delta Time) are highly concentrated as one single wallet can own more than 50\% of the NFTs. In LANDS (Sandbox), the largest wallet owns less than 50\%, but still over 1/3 of the total NFTs. 

Figure~\ref{fig:most_hold_3d} further shows how the largest percentage owned by a single wallet has changed over time.  Since each game started at different time, we plot from the start time of the latest game.  As shown in Figure~\ref{fig:most_hold_3d}, the largest percentage of NFTs owned by a single wallet in each game has shown some small variations, with an increasing trend, over time, but overall, it remains stable.

The largest percentage of NFTs owned by a single wallet does not indicate the relative dominant power over other wallets.
To more quantitatively characterize the relative dominance of this wallet, we define a metric, called ownership dominance index (ODI), as follows:
$$ODI=\frac{Largest\_Num\_of\_NFTs}{Average\_Num\_of\_NFTs},$$ 
where the $Largest\_Num\_of\_NFTs$ represents the largest number of NFTs owned by a single wallet, while the $Average\_Num\_of\_NFTs$ represents the average of NFTs owned by all wallets.

\begin{figure*}[!h]
\captionsetup[figure]{font=small}
\centering
\begin{minipage}{.32\textwidth}
\centering
\captionsetup{font=small}
\includegraphics[width=.9\linewidth]{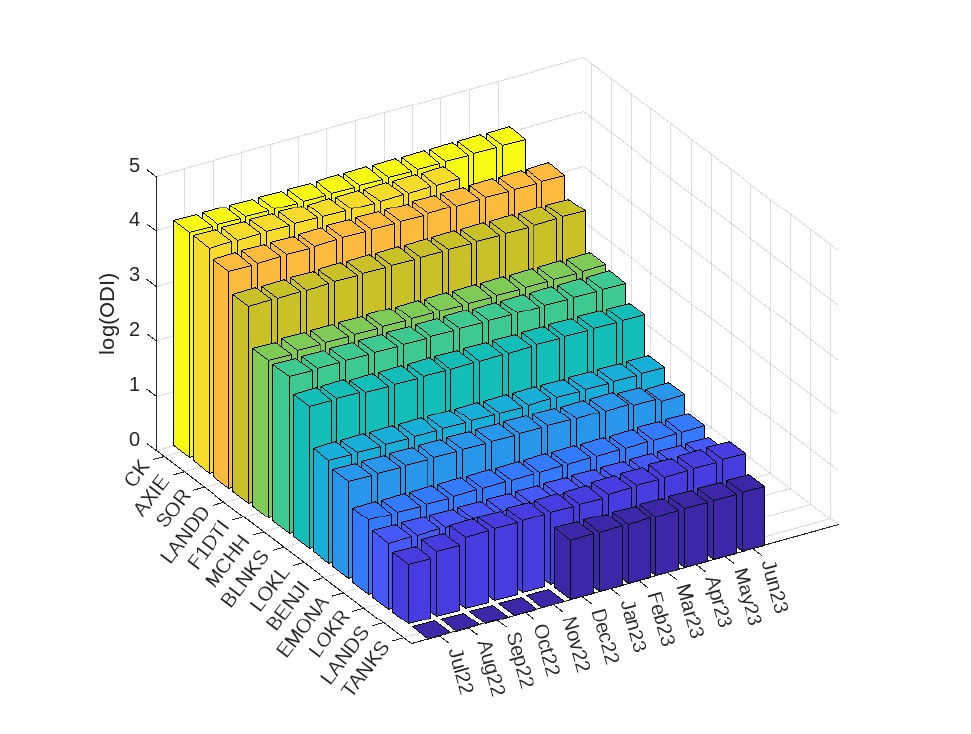}
\vspace{-10pt}
\caption{The ODI in the games \normalfont{(change over time)}}
\label{fig:ODI-3d}
\end{minipage}
\hfill
\begin{minipage}{.32\textwidth}
\centering
\captionsetup{font=small}
\includegraphics[width=.9\linewidth]{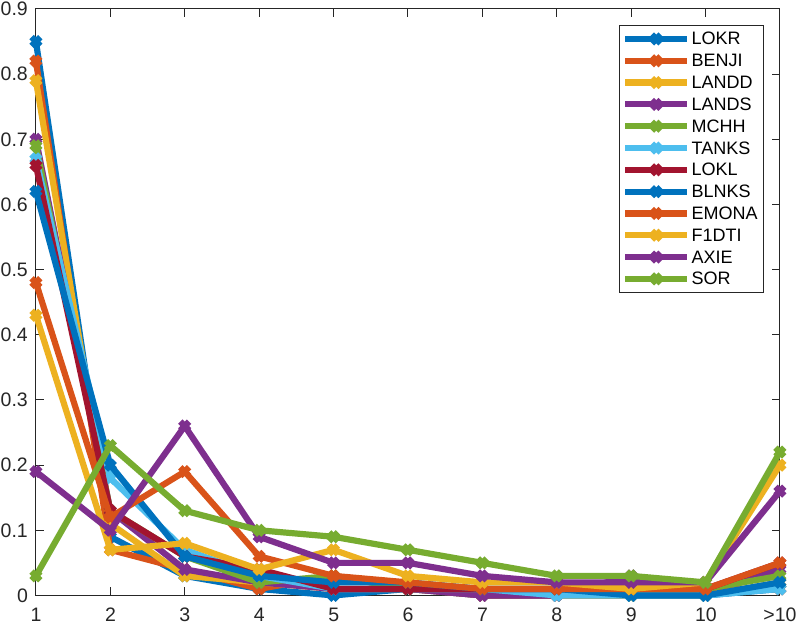}
\vspace{-10pt}
\caption{Wallets that owns one or more NFTs}
\label{fig:num_hold_absolute}
\end{minipage}
\hfill
\begin{minipage}{.32\textwidth}
\centering
\captionsetup{font=small}
\includegraphics[width=.9\linewidth]{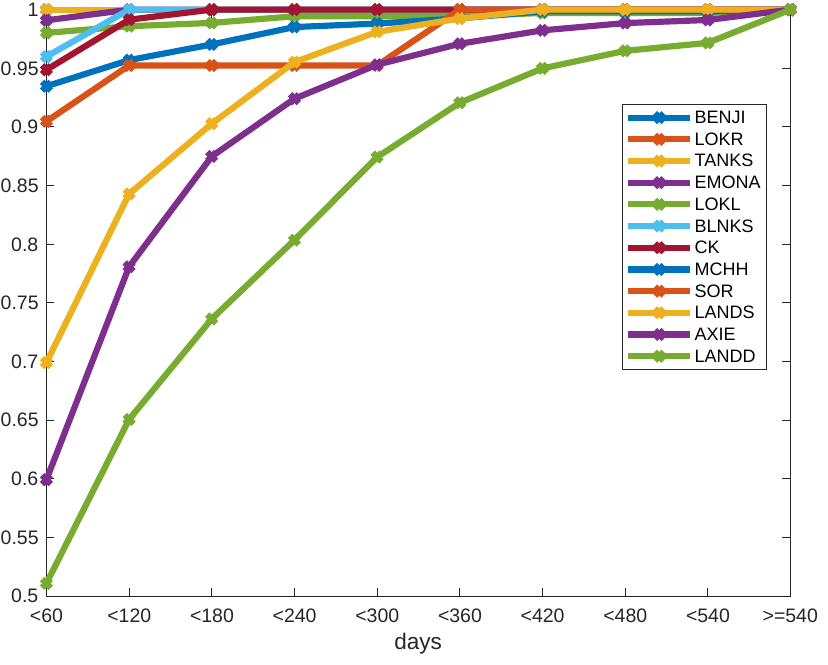}
\vspace{-10pt}
\caption{Time interval (days) between the two purchases by the same wallet}
\label{fig:interval}
\end{minipage}
\vspace{-5pt}
\end{figure*}
Figure~\ref{fig:odi_recent} shows the ownership dominance index (ODI) for each game as of 06/30/2023, while Figure~\ref{fig:ODI-3d} (the y-axis is in log scale) shows how the ODI changes over time.
As shown in the figures, the ODI for those games range from 10 to over 10000, a wide range. Even for BLNKS and EMONA where the largest percentage owned by a single wallet is relatively small, 1.52\%, and 0.39\% as shown in Figure~\ref{fig:most_hold}, respectively, their ODI values show that single wallet still has over 300 and 20 more times NFTs, respectively, than an average wallet owns in these games. Even worse, we can see from Figure~\ref{fig:ODI-3d} that the ODI for most games has been kept increasing over time. This shows a trend of NFT ownership concentration and potentially is not good for NFT trades and circulation among players. 

By excluding the wallet owning the largest percentage in each game, we also calculate the ODI of the wallet that owns the second largest percentage of the NFTs over the rest of the wallets. The result shows a similar trend, with the ODI falling in the range from 12 to 77. While the ODI values are much smaller than before, the relative dominance between the second largest wallet and the rest wallets is still significant.

\begin{table}
\footnotesize
\centering
\caption{Summary of Ethereum based NFT Games. \normalfont{Top wallets own the majority of NFTs in each game.}}
\label{tab:games_eth_hold_top}
\begin{tabular}{c|c|c|c|c}
NFT& Top 1 holds& Top 10 holds& Top 25 holds& Top 50 holds\\ \hline
BLNKS& 25.13\%& 42.03\%& 47.23\%& 52.05\%\\\hline
BENJI & 1.52\%& 8.74\%& 14.06\%& 19.34\%\\\hline
CK& 10.33\%& 31.4\%& 39.11\%& 45.72\%\\\hline
LANDD& 72.33\%& 81.77\%& 82.25\%& 83.71\%\\\hline
EMONA& 0.39\%& 2.62\%& 4.43\%& 6.24\%\\\hline
LOKL& 3.66\%& 23.57\%& 35.16\%& 44.66\%\\\hline
LOKR& 2.14\%& 11.59\%& 19.08\%& 26.97\%\\\hline
MCHH& 71.90\%& 90.89\%& 91.9\%& 92.92\%\\\hline
LANDS&3.37\%&50.15\%&56.39\%&60.15\%\\\hline
SOR& 60.25\%& 64.42\%& 66.45\%& 68.53\%\\\hline
TANKS& 3.22\%& 15.58\%& 26.32\%& 38.58\%\\\hline
F1DTI& 56.63\%& 67.93\%& 73.78\%& 78.33\%\\\hline
AXIE& 0.76\%&6.25\%&6.89\%&7.56\%\\
\end{tabular}
\end{table}

The high ownership dominance result drives us to study how many NFTs an average wallet may own. We plot in 
Figure~\ref{fig:num_hold_absolute} the distribution of wallets that own one or more NFTs in each game. As we can observe, for most of the games, the majority of the wallets only own one NFT. For example, Sandbox and Decentraland are very similar in genres and content, and we observe very similar patterns where over 80\% of the LAND owners have no more than two LANDs simultaneously.  Table~\ref{tab:games_eth_hold_top} further summarizes the NFTs owned by top wallets. Overall, we can observe that top 50 wallets in each game own more than 1/3 of the total NFTs in 9 games.  

\emph{Combining the findings from  Figure~\ref{fig:most_hold} to Table~\ref{tab:games_eth_hold_top}, we can find that currently, the ownership dominance in these NFT games is very high, which is detrimental to a thriving NFT game marketplace.
Normally, we understand that the game developer's team keeps a large portion of the NFTs for their profit. However, as we have observed here, how much the developer's team should control deserves careful deliberations. While we cannot make a suggestion based on the current data, we believe that the high concentration of NFT ownership will hinder the circulation of NFTs, and eventually, the economic model itself. 
}

\begin{table}
\footnotesize
\centering
\caption{The average time interval (days) between the two purchases by the same wallet}
\label{tab:interval}
\begin{tabular}{c|c|c}
NFT & Average Time Interval (days)& SD of Time Interval\\ \hline
BLNKS & 8.6 & 22.79\\\hline
BENJI & 0.004& 0.06\\\hline
CK & 7.3 & 21.42\\\hline
LANDD & 121.33 & 151.23\\\hline
EMONA & 2.28 & 11.41\\\hline
LOKL & 7.98 & 42.58\\\hline
LOKR & 1.44 & 4.28\\\hline
MCHH & 15.74 & 55.61\\\hline
LANDS& 71.74 &84.11\\\hline
SOR & 28.9 & 76.97\\\hline
TANKS & 0.03 & 0.23\\\hline
AXIE & 72.29 & 92.98\\
\end{tabular}
\end{table}
 
Given the high dominance of single wallet, we conjecture that such a trend may also adversely  affect how players actively conduct transactions. 
Thus, we further look into how active users are conducting the NFT trades. To capture the activeness level of players, we focus on how frequently users buy and sell NFTs. As they show similar pattern, we present the result on purchases. 

Figure~\ref{fig:interval} shows the CDF of inter-purchase time between two purchases from the same wallet. The inter-purchase time is the days between two consecutive purchases by the same player. As we can observe, for example, in  Decentraland, 61.55\% of the NFT purchases happened within 100 days after the last purchase. The median between two LAND purchases is 55 days, and the average is 121.33 days. For Sandbox, 59.54\% of the next purchase happened within 30 days of the last purchase, with a median of 13 days, and an average of 53.49 days.
 
For Axie Infinity, 59.92\% of the next purchases happened within 60 days of the last AXIE purchase, with an average of 72.29 days. Overall, more than half of the purchases happened in 60 days for all games. 

Table~\ref{tab:interval} further shows the average days between two purchases with the standard deviation. As the result shows, except for BENJI and TANK, in all other games, the subsequent purchase of the same NFT always happens some days later, up to 121 days for DecentraLand. Such results indicate that NFT trades among players are not very active. Most players would just hold one or several NFTs for a long time without trading. Note that we did not list F1TDI because it has been shutdown. There were a lot of withdraw trades affected by that, which significantly disturbed the inter-purchase interval analysis. Thus we excluded that here and subsequent analysis where appropriate.  

\begin{figure*}[!h]
\captionsetup[figure]{font=small}
\centering
\begin{minipage}{.32\textwidth}
\centering
\captionsetup{font=small}
\includegraphics[width=0.9\textwidth]{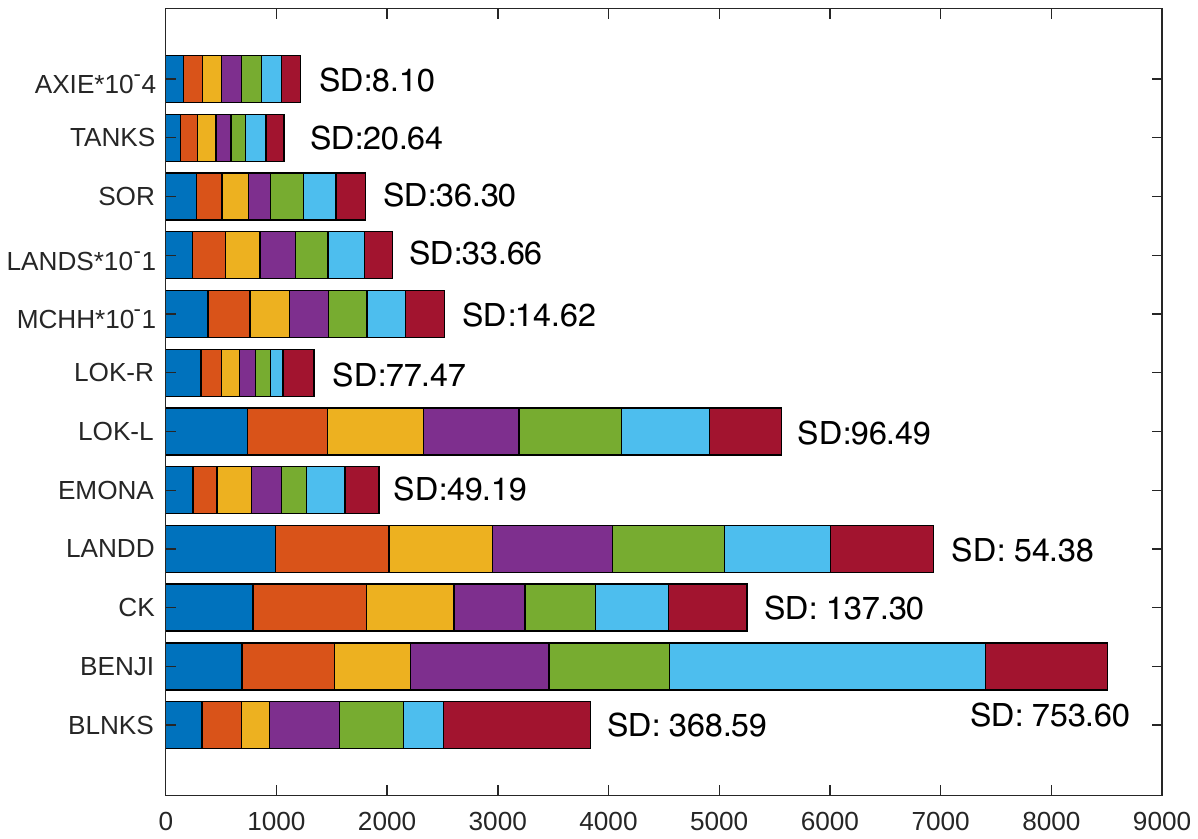}
\caption{NFT trade average: 7 days of a week}
\label{fig:trade_day_of_week}
\end{minipage}
\hfill
\begin{minipage}{.32\textwidth}
\centering
\captionsetup{font=small}
\includegraphics[width=0.9\textwidth]{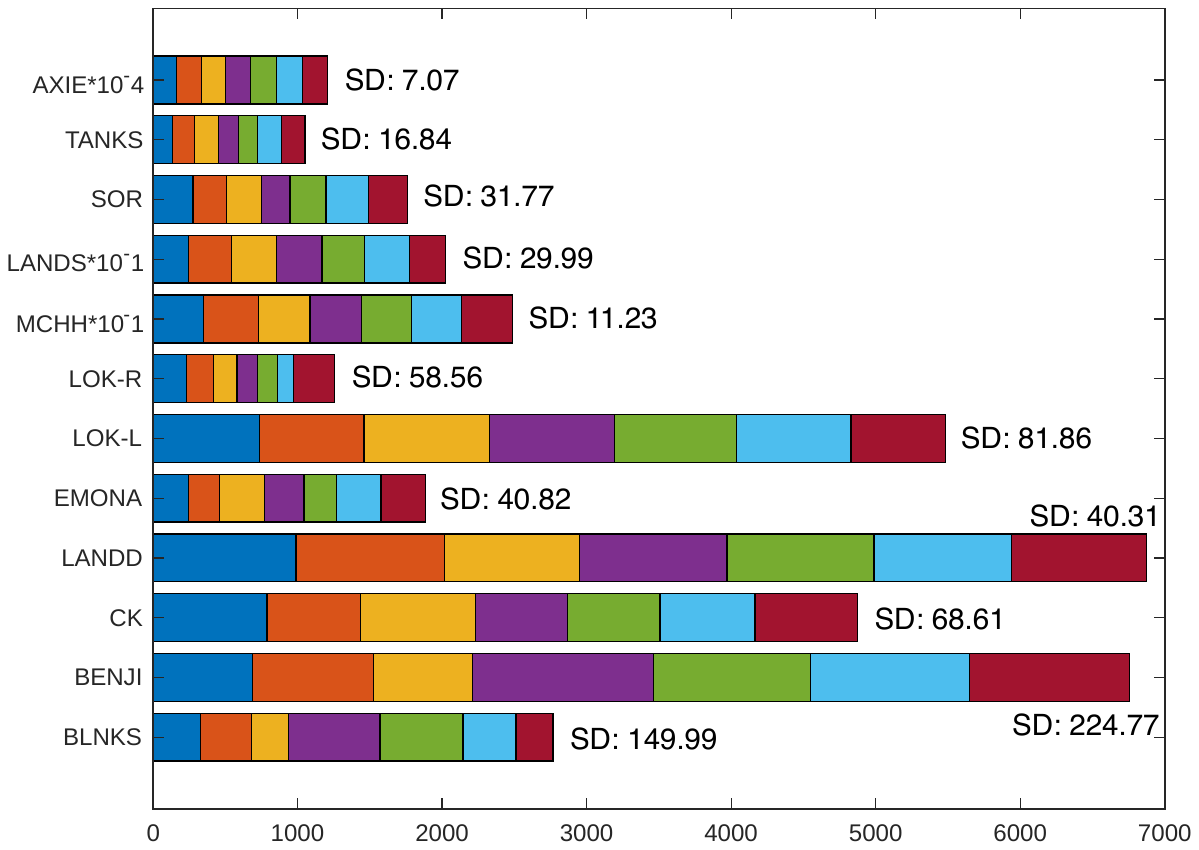}
\caption{NFT trade average: 7 days of a week \normalfont{(excluding the "Big Day" trades)}}
\label{fig:trade_day_of_week_exclude}
\end{minipage}
\hfill
\begin{minipage}{.32\textwidth}
\centering
\captionsetup{font=small}
\includegraphics[width=.9\linewidth]{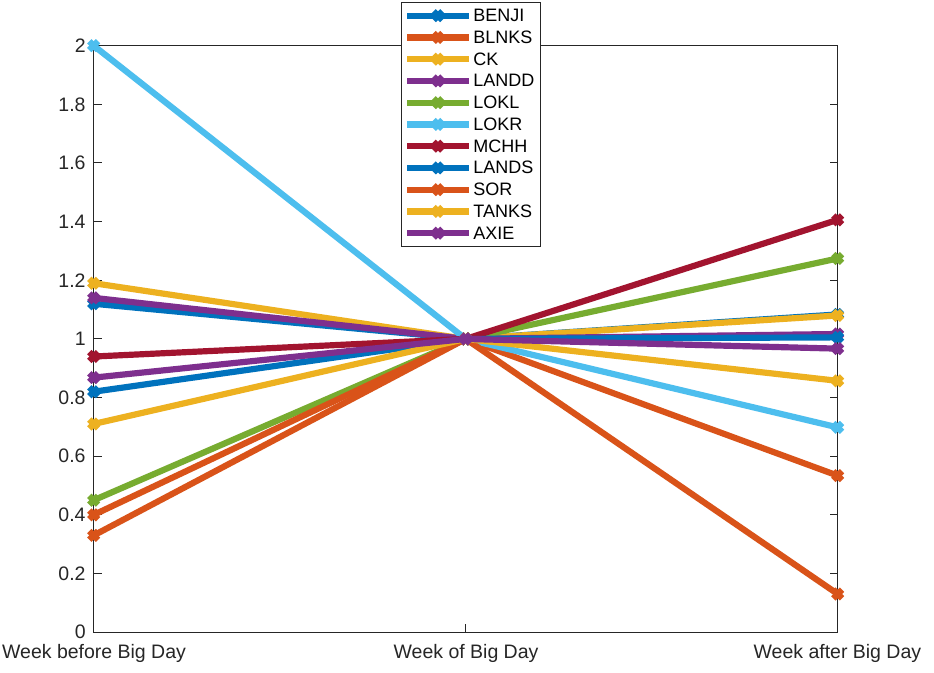}
\caption{Price changes before and after the ``Big Day"s}
\label{fig:big_day_week_price}
\end{minipage}
\vspace{-10pt}
\end{figure*}
\vspace{0.1in}
\noindent\textbf{Takeaway \#1:}\emph{The NFT distribution analysis shows that currently the NFT distribution among players is highly concentrated, and the large number of players who own a small number of NFTs are not actively conducting transactions. This may be detrimental to the NFT games ecosystem in the long run. }

\subsection{Boosting effect of promotion events varies game by game and does not sustain}

Having studied the NFT distribution among players, next, we investigate the trades, as trades are directly contributing to the profit or loss of the players. 

As an NFT game player, the chance to earn profits in terms of investment seems more straightforward and realistic because players are expected to at least interact with NFTs and/or other tokens when playing the game. NFT games often provide a detailed guide and policy description on how to earn and what is expected to earn through the investment. The players can not only invest in NFTs but also invest utility tokens or governance tokens. A player can invest the tokens and earn profits through the following approaches: (1) NFT Trading, (2)  NFT Functioning, (3) Utility Token Trading, (4) Utility Token Functioning, and (5) Utility Token Staking.

Next, we discuss the earning based on the real-world NFT trading data retrieved from the blockchain.

\subsubsection{NFT Trading}\hfill

To study the NFT trades, we first summarize day-of-week trade average through the timestamp with the UTC timezone in Figure~\ref{fig:trade_day_of_week}. Figure~\ref{fig:trade_day_of_week} shows that different NFT games have trading peaks on different days, however, there is no consistent pattern (e.g., always peaks on Tuesday) that we could observe from the analysis, except that for each game, there is a day within which the average number of trades is larger than other days. 

As NFT trades currently are not constrained by the time or location, we wonder if the larger number of trades on a day may be caused by some promotion events. To investigate this, we conduct the following analysis in two steps. 
First, for each game, we find out the day with the largest number of trades, which we call "Big Day" in Table~\ref{tab:day-of-week}.
The trades on these "Big Day"s are significant in terms of the number of trades. 
When we exclude those "Big Day" trades, shown in Figure~\ref{fig:trade_day_of_week_exclude}, we find that the standard deviation of the daily trade counts reduced remarkably and the daily trade numbers remain stable. This suggests that the "Big Day" does have a "big" effect on these games. 

Second, since the "Big Day" matters in trading counts, we want to find out what led a day to "Big Day". It is intuitive that the NFT platform or the application providers have the intention to  have promotion events to promote the NFT trading, such as marketing events. In addition, other events, such as those triggered by some ad-hoc tweets, can also lead to a peak of NFT trades.   
Therefore, for all these games that we have identified the day with the largest trade amount, i.e., "Big Day", we tried to confirm with other information resources (e.g., news media) that some promotion event did happen for that game. 
By checking each of these "Big Day"s, we found that they belong to the following categories, as shown in Table~\ref{tab:day-of-week}. 
\begin{enumerate}
        \item Marketing. We can see among the 12  ``Big Day"s, 6 of them are within 40 days from the earliest NFT trade. Although the marketing period  vary from several weeks to months among different games, there is no doubt that marketing will be extremely active around the game release date. The earliest NFT trade may not be not identical to the game launch date but it would be close to the game release date. For instance, The Sandbox published a video named "What Can I Do With My LAND" on Youtube on 2/9/2022, \ie exactly on the "Big Day".
        \item Big News. Blankos Block Party announced events with Amazon Prime Gaming on 3/15/2022, \ie 4 days before the "Big Day". Samsung launches metaverse store in Decentraland on 1/7/2022, \ie exactly on the "Big Day". Major video game publisher Ubisoft announced a partnership with The Sandbox on 2/8/2022, \ie 1 day before the "Big Day". Sorare announced the MLB auction on 7/24/2022, which is 10 days before the "Big Day".
        \item Social Media. A gaming Youtuber published a video introducing Blankos Block Party on 3/17/2022. \ie 2 days before the "Big Day". Another Youtuber who has 322.1k subscribers named "RJ Jacinto" published his first video playing Axie Infinity on 11/5/2021, which is exactly the "Big Day". The video has 26k views. 
\end{enumerate}

\begin{table*}
\centering
\noindent
\footnotesize
\caption{"Big Day"  (with the largest number of daily trades) of Each Game}
\label{tab:day-of-week}
\begin{tabularx}{\textwidth}{X|X|X|X|X|X|X|X|X|X|X|X|X}
 & BLNKS & BENJI & CK & \makecell{LANDD} & EMONA & LOKL & LOKR & MCHH & \makecell{LANDS} & SOR & TANKS & AXIE \\\hline
Big Day & 3/19/22 & 2/18/22& 11/21/22 & 1/7/22 & 10/17/18 & 7/8/20 & 7/11/20 & 3/31/19 & 2/9/22 & 8/4/22 & 6/23/23 & 11/5/21 \\\hline
Day of Week & Sat. & Fri. & Mon. & Fri. & Wed. &  Wed. & Sat. & Sun. & Wed. & Thurs. & Fri. & Fri. \\\hline
Days since the 1st trade & 39 & 1 & 99 & 466 & 19 & 33 & 7 & 168 & 13 & 382 & 181 & 191 \\\hline
{\bf \# of Trades} & {\bf 1,068} & {\bf 1,755} & {\bf 376 } & {\bf 62} & {\bf 41} & {\bf 80} & {\bf 86} & {\bf 301} & {\bf 228} & {\bf 48} & {\bf 16} & {\bf 108,207} \\\hline
{\bf \% of Total Trades} & {\bf 27.85\%} & {\bf 20.63\%} & {\bf 7.16\%} & {\bf 0.89\%} & {\bf 2.13\%} & {\bf 1.44\%} & {\bf 6.42\%} & {\bf 1.20\%} & {\bf 1.11\%} & {\bf 2.66\% }& {\bf 1.50\% }& {\bf 0.89\% }\\\hline
Event & Social Media, News, Marketing & Marketing & Marketing & News & Marketing & Marketing & Marketing & Social Media & Marketing & Marketing, Social Media & Social Media & Social Media\\\hline
Price re-stable interval & 7&5&3&2&1&14&5&4&6&3&4&2\\\hline
Price re-stable rate & 1.01&0.96&1.01&1.07&0.84&1.15&0.81&1.1&0.85&1.09&1.06&1.03\\
\end{tabularx}
\end{table*}

Note that here we only presented the trades on the identified "Big Day" so far in Table~\ref{tab:day-of-week}. Ideally, we want to precisely quantify the trade counts due to such promotion events. However, in practice, we find it is hard, if not possible, to precisely know how long the effect of such events diminish over time. That is why we name it as "Big Day", in order to differentiate it from the entire promotion events. Even so, by just taking out the trades caused by these single "Big Day"s, we found that the trades weekly of these games largely remain stable.  
Furthermore, there may exist other promotion events during our investigation period. However, they cannot be manually confirmed by us via other public sources as we did for the "Big Day", and thus we did not report here. 

Since the promotion events may not only boost the number of transactions, but also the price, we also study the price change around "Big Day"s. Figure~\ref{fig:big_day_week_price}  presents the price change rate before and after the identified "Big Day"s. 
Since different NFTs are priced very differently, in this figure, we set the average price of the NFT on the week of the Big Day as the baseline and measure the average price one week before and after  the "Big Day"s. 
Interestingly, we find that (1) the promotion events do show some boosting effect for NFT price as well for the majority of these games, but also show some counter effect for several games, suggesting more research on promotion strategies is needed, 
 
(2)  more importantly, the NFT price increase and the price decrease are largely symmetric around the "Big Day"s, indicating that the promotion effect does not sustain.  
We further set the average trade price on ``Big Day" as the baseline, and calculate the cumulative average trade price for each game, \ie the average trade price of 1-day, 2-day, 3-day trades after the ``Big Day", and calculate its ratio against the ``Big Day" price, shown in Table~\ref{tab:day-of-week} as {\em Price re-stable rate} when the closest is reached. Accordingly, {\em Price re-stable interval} shows the time (days) it takes. 
As we can observe from the result, the longest time of the promotion effect is about two weeks, while most of the promotion effect only lasts no more than one week.

\subsubsection{NFT Functioning}\hfill

NFTs are essentially game assets or props in the games. Thus, when the owner performs the functionality of the NFT in the game, NFT can earn tokens for its owner. For instance, the LAND owner in Decentraland~\cite{decentraland} can rent her/his LAND to others to hold events and charge the rent. The daily rent range is between 1 and 1,000 MANA. Axie Infinity~\cite{Axie} players can make a three-axie team and let the axies battle with the axies of other players. The battle is under the player vs. player (PvP) mode. Players earn 5 SLP for every winning in the PvP mode.

However, earnings gained from these approaches require the players to purchase NFTs first. Compared to the average price of NFTs \ie \$11194 in Decentraland and \$203.43 in Axie Infinity, the amount of earnings in MANA (amount starts from 1) and SLP ( around 0.027 USD per SLP) is extremely low, and  can be  ignored.

\subsubsection{Utility Token trading, functioning, staking}\hfill

We provide a detailed analysis of utility tokens in Appendix \ref{sec:utility}. We found that their contribution to the profit of the players is trivial.

\vspace{0.1in}
\noindent\textbf{Takeaway \#2:} \emph{From the analysis of the real-world NFT trades, we find that  while
the promotion events can promote the trades amount and prices for NFT games, their effect varies game by game, and such effects do not sustain for a long time, in terms of both the boosted traded amount and the boosted NFT price. 
}
\subsection{Majority of the players did not earn}
\begin{table*}
\centering
\caption{\label{tab:profits} Summary of NFT Trade Profits}
\begin{tabular}{c|c|c|c|c|c|c||c}
NFT & Trades & Traded NFT & \makecell{Average Trade \\Price (\$)} & \makecell{Average Profit (\$)} & \makecell{Median \\of Profit (\$)} & \makecell{Mode of\\ Profit (\$) }& \makecell{Developer's Team\\ Average (Selling) Profit (\$)}\\\hline
BENJI  & 8,298 & 5,000 & 573.26 & \textcolor{blue}{-0.16} & 0 & 0 & N/A \\\hline
 BLNKS & 3,835 & 2,317 & 361.00 & 19.22 & \textcolor{purple}{-2.06} & 0 & 282.81\\\hline
 CK & 5,250 & 4,883 & 35.57 & 78.64 & 0.74 & \textcolor{red}{-1.26}  &  22.93\\\hline
 LANDD & 4,505 & 3,103 & 12,295.24 & \textcolor{blue}{-213.19} & 295.78 & 325.73  & 12172.79\\\hline
 EMONA & 1,928 & 1,700 & 106.96 & \textcolor{blue}{-22.51} & \textcolor{purple}{-0.24}& N/A  & 52.52\\\hline
 LOKL & 5,565 & 2,906 & 1315.28 & 721.38  & 146.11& 174.38 & 933.46\\\hline
 LOKR & 1,341 & 1,218 & 32.01 & 82.18  & 21.61 & \textcolor{red}{-83.34}& 13.54\\\hline
MCHH & 25,183 & 13,854 & 120.04 & \textcolor{blue}{-3.89} & \textcolor{purple}{-0.40}& \textcolor{red}{-45.37} & 114.99\\\hline
 SOR & 1,807 & 1,613 & 116.27 & \textcolor{blue}{-33.40} & \textcolor{purple}{-1.18} & \textcolor{red}{-5.21}  & 112.92\\\hline
 TANKS & 1,067 & 909 & 473.61 & 207.07 & 117.94 & N/A  & 465.82\\\hline
LANDS & 20,480 & 14,120 & 6,148.78 & 156.24 & \textcolor{purple}{-49.04} & \textcolor{red}{-1.92}  & 6732.83\\\hline
 AXIE & 12,168,031 & 7,334,880 & 203.43 & 196.93 & 145.33 & 0  & 331.65\\
\end{tabular}
\end{table*}
\vspace{-5pt}
 
After studying the NFT distribution and NFT trade distribution, in this section, we focus on whether the players actually make profits via these trades. 
Table~\ref{tab:profits} summarizes the player NFT trade profits of these games. Again, we excluded the shutdown game F1TDI in the profit analysis. 
In Table~\ref{tab:profits}, {\em Trades} shows the number of NFT transfer transactions, including the initial sales from the game-developing team (\ie the first-hand transactions) and the sales between the players (\ie all transactions except the first-hand sale, which is from the developer's team). {\em Traded NFTs} represents the number of unique NFTs involved in the transactions we collected. 
The closer the value of {\em Traded NFT} and {\em Trades} is, the larger the proportion of NFT first-hand transactions is, the worse the circulation of NFT will be. 
Intuitively, selling and purchasing lead to circulation, which is directly related to the profitability of NFT transactions.  
All the profits are calculated as the difference between two consecutive trade transactions of the same NFT (\ie based on the token ID). The {\em Average Profit} is the average of the profit per trade (a negative profit indicates a loss). We also calculate the average trade price in the table~\ref{tab:profits}. 
Since conducting transactions on the blockchain would consume resources (\eg gas on Ethereum), which is called transaction fees, for a transaction where the profit is 0, we still count it as a loss because the player needs to pay for the transaction fee when she/he traded the NFT. 
In the table, {\em Median} refers to the median value of player's profits, while {\em Mode} refers to the most frequently seen profit value that appears in the trade profits of the game.

From Table~\ref{tab:profits}, we can observe that 6 games have a negative profit, in terms of {\em Average Profit}, indicating that players in these games do not earn on average. Since the average profit could be affected by the extreme profit values, we also report the {\em Median} of the player's profit. As shown in the table, even though the {\em Average} profit in Blankos Block Party is about \$19, the {\em Median} of the profit is negative, \$-2.06, indicating half of the players in this game lost money. Similarly, in Cryptokitties, the {\em Average} profit of the trade is \$78, but the {\em Median} is \$0.79, close to 0,  indicating half of the players barely made any profit. 
By further looking into the {\em Mode} in the table, we find that among the remaining games with a positive {\em Average} profit, two more have a negative {\em Mode} value, indicating the most frequently profit made by players in these games is negative.  Putting all together, we believe that on average players in 9 games did not earn. 

On the other hand, the last column in Table~\ref{tab:profits} shows the {\em Average} profit of the trades conducted by the developer's team. Note for their trade, we consider their cost is 0, since we do not know exactly how much cost they have paid. Note that we cannot calculate in the case of BENJI and it uses ERC-1155 with same token ID in the collection. This is the game specific setup because BENJI is just a qualification used to participate the game and receive incentive.
Thus, their trades always make money, and shows a upper bound of the real profit. 
Nevertheless, compared to the profit situation of the ordinary players, the result suggests that the majority profits of the games went to the developer's team. 
\begin{figure}
\centering
\includegraphics[width=0.46\textwidth]{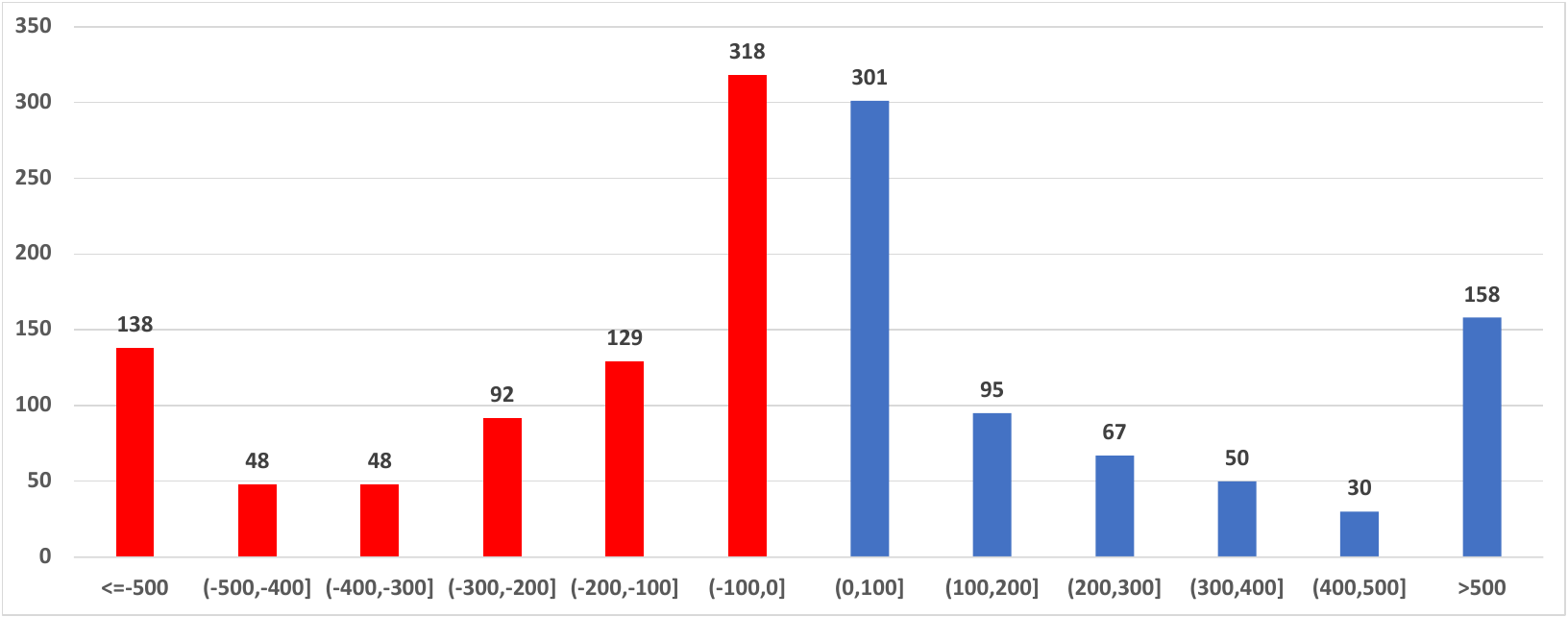}
\caption{The profit distribution for BLNKS}
\label{fig:blnks-pro-dis}
% \vspace{-10pt}
\end{figure}
We further calculate the profit of each trade in each game.  Figure~\ref{fig:blnks-pro-dis} shows an example of BLNKS (recall it has a positive average profit, but a negative median profit).  As we can observe from the figure, most of the profits are in the range between \$-100 and \$100, with more on the negative sides than on the positive side. Most of other games have a similar pattern. 

\noindent\textbf{Takeaway \#3:} 
\emph{The profit analysis shows that the majority of profits are earned by the game developer's team at the first transaction of the NFT, while the majority of the players  can hardly earn profit in these games.} 

\vspace{0.1in}
\noindent\textbf{In summary, by looking into the NFT distribution among players, effect of promotion events, and profit siutation of players, we find that, currently, the play-to-earn model largely does not work well in these games. Players should take the risks into consideration when participating in these games.}

\begin{table*}
\centering
\small
\caption{Profit model using linear regression}
\label{tbl:profitmodel-1}
\begin{tabular}{c|c|c|c|c|c|c||c}
Game & $\alpha$ & \makecell{Average Trade \\Price (\$)} & $\beta$ & Circulation rate & $\gamma$ &Mean Profit& Estimated Profit  \\\hline
BENJI&18.96&573.26&-0.38&0.397445168&500&-0.16&-0.1162\\\hline
BLNKS & 22.3 & 361 &-0.56 & 0.395827901 & 500 &19.22& 18.0540 \\\hline
CK&3.557&35.57&-0.002&0.069904762&100&78.64&10.4763\\\hline
LANDD & 271 & 12295.24 & -0.06 & 0.311209767 & 800 &-213.19&-217.7466   \\\hline
EMONA&1.06&106.96&-0.008&0.118257261&100&-22.51&12.03\\\hline
LOKL & 85.49 & 1315.28 & -0.03 &0.477807727 & 1400 & 721.38&714.9656 \\\hline
LOKR&0.1601&32.01&-0.002&0.091722595&100&82.18&9.2683\\\hline
MCHH&-12&120.04&-1.00&0.449866974&900&-3.89&272.8403\\\hline
SOR&11.62&116.27&-0.46&0.107360266&100&-33.4&-31.1282\\\hline
TANKS&30&473.61&-0.004&0.148078725&100&207.07&42.9134\\\hline
LANDS & 30.7439 & 6148.78 & -0.058 & 0.310546875 & 900&-49.04&-46.3932 \\\hline
AXIE &1.0172 & 203.43 & -1.0 & 0.397200747 & 1000 &196.93& 194.7879 \\
\end{tabular}
\end{table*}

\section{Model for current P2E in the NFT games}
\label{sec:extend}
In an NFT game ecosystem, there are two types of players and we assume they are rational in participating in the game. One type of player is the ordinary player (players) in the game. We assume that there are $N$ players in total. The other type of player is the game designer or the platform. We use $P$ to denote the platform. The time interval is discretized as denoted as $t = 0, 1, 2, \ldots, T$, where the $t$th time slot denotes the time interval $[t, t + 1)$. $T$ is the expected lifespan of the game. Suppose there are $K$ NFTs in total for the game and every NFT has the same price at the same time interval and we denote that as $v_t$. 
The platform can set the initial price of the NFTs when the game is released and the price is denoted as $v_0$. At each time interval $t$, the platform can choose to release $k_t$ NFTs. The price of the released NFTs is the same as the price of NFTs that are traded among the players. So, we have $\sum_{n=0}^{T}k_t = K$. At each time interval $t$, players can either buy NFTs from the platform or other players, sell the NFTs she/he already owns, or hold the NFTs and participate in the game. The NFTs have functionalities in the game, when players hold the NFTs and use them to participate in the game, players can gain some benefits, we use $f$ to represent the benefits the NFT owner can earn for using a single NFT to participate in the game during a single time interval. $f$ shows the functionality that NFTs performed in the game and the quality of the game. Thus, $f$ depends on the platform and is positively correlated with the cost of the game. Once the game has been released, $f$ remains the same value. In addition, the market estimates the price of NFTs at the next time interval $t+1$ and we denote that with $v'_t$. 

However, the difference between the estimated price $v'_t$ and the real-world price $v_{t+1}$ commonly exists. As discussed in Section 4.2, the price of NFTs is also significantly affected by external events and some of the events cannot be expected even by the platform. We use $e_t$ to denote the price movement caused by external events. Thus, $v_{t+1}$ can be calculated as $v_{t+1} = v'_t + e_t$. The cost of the platform developing the game is denoted as $C_0 K f$, where $C_0$ is a game-dependent constant value. 

At the beginning of the game, the platform has spent the cost for developing the game, and set $v_0, k_0$, and $f$ for the game. The platform can release and sell NFTs at any time interval $t$. Players can buy NFTs from the platform or from other players. During each time interval, NFT owners can earn $f$ for using the NFT to participate in the game. Players also gain profits by selling the NFTs. Players and the platform can only estimate the NFT price for the next time interval based on the current price and have no future knowledge. The cumulative profits for the platform at time $t$, i.e., $R_{platform}(t)$ is calculated by the following formula:
\begin{align}
R_{platform}(t) = \sum_{n=0}^{t} v_n \times k_n - C_0 K f 
\end{align}

And the cumulative profits for the players at time $t$, that is, $R_{player}(t)$, is calculated by the following formula:
\begin{align}
R_{player}(t) = \sum_{n=1}^{I} (v_{s_n} - v_{b_n} + f(s_n - b_n)) - \sum_{n=1}^{J}v_{b_n} \notag\\+ \sum_{n=1}^{J}(t - b_n)f,
\end{align}
where $I$ represents the number of NFTs that player has sold and $J$ represents the number of NFTs that player still holds. Thus, $I + J \leq \sum_{n=0}^{t}k_t \leq K$. $s_n$ represents the time interval that the NFT is sold and $b_n$ represents the time interval that the NFT is purchased. $s_n, b_n \in 0, 1, 2, ..., T$ and $t > s_n > b_n$.

For player decisions at time $t > 0$, players buy an NFT if the expected return at $t+1$ is less than the cost, that is, $v'_t + f - v_t > 0$, sell the NFT if $v'_t + f - v_t < 0$ and hold the current NFT if $v'_t + f - v_t = 0$.

For platform decisions at time $t$, the platform releases $\frac{(t+1)K}{T+1} - \sum_{n=0}^{t-1}k_n$ NFT when $v'_t > v_t$; releases $\frac{K}{T+1}$ NFTs when $v'_t = v_t$, and holds the rest when $v'_t < v_t$.

Both the platform and the players would like to maximize profits. The platform aims to increase $v_0$ and decrease $C_0Kf$. The maximum potential profits of the platform are $k_0v_0 + (K-k_0)v_{max} - C_0Kf$, $v_{max}$ is the maximum price for all $t$ and $k_{max} = K - k_0$. There are two ways to satisfy the potential maximum profits: 1. $v_t$ continues to increase and the platform releases NFTs at a stable speed; 2. $v_0$ is the maximum price for all $t$ and the platform releases all $K$ NFTs at the beginning.

The potential maximum profits of the players depend on the price fluctuations, $f$, and the number of the NFTs players own, where $v_T$ is the maximum price of the NFTs. If $f = 0$, the strategy of the players will be the same as that of the platform. Thus, players are more tolerant of the price fluctuation within $f$ and players expect to trade more NFTs for larger profits. 

Let’s assume there is a special game player (distinct from the platform) who mimics the platform's actions once they receive the same number of NFTs. Starting from that time slot, we can treat the game as one involving two platforms (the original platform and the special player), both trying to sell NFTs to the remaining regular players. From this point onward, the platform and the special player will compete for revenue, each lowering the NFT values in an effort to maximize their individual revenue.
As a result of this dynamic, no pure Nash equilibrium can exist, because both the platform and the special player continuously adjust their strategies in response to each other’s actions, preventing the game from settling into a stable state.

From the model above, we conclude that the pure Nash Equilibrium for the whole system is nonexistent. Therefore, the current P2E model is not able to make the current NFT game ecosystem stable if players have a pure strategy. 
\section{Incentive Mechanism}
\label{sec:tla}
Based on the empirical data analysis discussed above, we realize that the effectiveness of P2E in the real world is quite different from the original design intention because the earn-without-payment (\ie incentives) is trivial (see Table~\ref{tbl:earn-1} and Appendix~\ref{sec:earn-wo-payment}). The majority of ordinary players are hardly able to expect to make profits in NFT trades. This has a negative impact in the long run for these games, i.e., the games cannot be sustained because players will sway away from these games. The disconnection identified between the intended design goal of Play-to-Earn (P2E) and the empirical analysis results disclosed through the data analysis accentuates the imperative for a critical overhaul in implementing P2E models in the real world. This imperative is particularly pronounced in incentives, which constitute the core element of the P2E model to attract players but yet appear to be under-studied. 

An intuitive consideration emerges in the form of incentive tokens. In most NFT games, incentives are predominantly confined to utility tokens or cryptocurrencies. Although a subset of games incorporates NFTs as incentives, it is noteworthy that players are often required to make NFT purchases before engaging in gameplay.  Many game developers use utility tokens as incentives to avoid the risks of lowering the primary NFT price. This strategic choice effectively mitigates the potential negative influence on NFT prices. However, it is essential to recognize that this approach also diminishes the likelihood of players, enticed by incentives, delving deeply into the game, developing an intrinsic interest in the gameplay itself, and becoming active participants. This is attributable to the limited utility scenarios of most utility tokens within the game environment. Consequently, players receiving incentives may be unable to engage extensively with the game by consuming utility tokens, leading them to sell the incentives and exit the game.

Employing NFTs as incentive tokens undeniably amplifies players' enthusiasm for the game. Simultaneously, to uphold the value, scarcity, and desirability of the primary NFT, developers may contemplate the creation of novel categories of NFTs endowed with augmented attributes or without in-game functionalities. These newly designed NFTs can serve as incentives whose effects on the price of primary NFTs are limited. This strategic approach not only can bolster player engagement but also can contribute to the sustained value proposition of the overarching NFT ecosystem.

Notably, many NFT games incorporate invitation/referral incentives; however, the amount of these incentives tends to be limited. Invitations and referrals emerge as highly effective mechanisms for attracting new players, contingent upon the attractiveness of the invitation incentive amount. Therefore, enhancing the allure of invitation/referral incentives holds the potential to significantly augment the influx of new players into the game community.

However, in several instances, existing incentive amounts have proven insufficient to captivate active players' participation in the game, further diminishing their appeal to potential players. Consequently, a judicious adjustment of incentive amounts becomes imperative to bolster user engagement among current players and entice prospective players to join the game.

To determine an appropriate incentive amount and explore the impacts of an appropriate incentive mechanism for the NFT games, next we develop an incentive mechanism model for this purpose.

\subsection{Incentive Modeling}

\subsubsection{Players in a P2E game.}

In designing an incentive mechanism, firstly, we consider the possible \emph{players} in a P2E game. There are two types of players and we assume they are rational in playing the game. One type of player is the \emph{game player}. Let us name the game players $1, 2, \ldots, N$. Another type of player is the \emph{game designer} (also called the \emph{platform}) of the P2E game. We assume only one such player represents the platform and use $P$ to denote the game designer (i.e., platform). In addition to these two types of players, we assume that there are $K \ge 1$ gaming assets. Each asset has only one copy. At any time, an asset is owned by either a game player or the platform. The prices of these gaming assets can be changing over time. It is assumed that the prices of an asset $j$ fall into a range $[L(j), U(j)]$.

Secondly, we examine the \emph{strategies} that the players can take in this game. We assume that time is discrete so that the game players can take actions only in distinct time slots. Assume the time interval that a game player can play the game is $[1, T]$ with $T$ time slots. In a time slot $t$, the set of the strategies that a game player $i$ takes is $S(i) := \{NULL, SELL(j), BUY(j)\}$ where $j = 1, 2, \ldots, K$ is a gaming asset. $NULL$ indicates that the player $i$ does nothing. $SELL(j)$ indicates that the player $i$, who owns the game asset $j$, sells this asset to the platform $P$ at $j$'s price at that time. $BUY(j)$ indicates that the player $i$ buys the asset $j$ from the platform at the asset's price at that time. We remark here that in such a game, we relay the actions among the game players to the actions between the game players and the platform. For example, if a game player $i$ sells an asset $j$ to a game player $i'$, such a trade is described as the strategy $SELL(j)$ for the player $i$ followed by the strategy $BUY(j)$ for the player $i'$.

The game platform has the set of strategy $S(P) :=  \{NULL, PSELL(j, w(j, t)), PBUY(j)\}$, where $w(j, t) = 0, 1, \ldots, U$ is the amount of \emph{incentive NFTs} released in a time slot $t$ for selling the asset $j$. Here, $U = \max_j U(j)$. For any game player, if he/she buys the asset $j$ at time $t$, then the platform takes the action $PSELL(j, w(j, t))$ and what the game player pays to the platform $P$ is the difference of $j$'s current price at time $t$ minus $w(j, t)$. At a time $t$, we use $\alpha(j, t)$ to denote the number of game players who take the action $BUY(j)$. 
If nobody wants to have $j$ (i.e., $\alpha(j, t) = 0$), then the game player, who owns $j$, takes the action $SELL(j)$. At time $t$, if $\alpha(j, t) \ge 1$, then the platform, with a probability $\frac{1}{\alpha(j, t)}$, takes the action $PSELL(j, w(j, t))$ to sell the asset $j$ to a game player $i$ who takes the action $BUY(j)$. $PBUY(j)$ indicates that the platform buys the asset $j$ from a game player who owns $j$ at the asset $j$'s price at that time. $PBUY(j)$ happens when $\alpha(j, t) = 0$. We assume that the $K$ gaming assets are released into the game at the beginning of the time slot $1$ and these gaming assets are owned by the platform initially.

Thirdly, we describe the assets' values. For each asset $j$, it has a \emph{basic profile} describing its prices in the range $[1, T]$. This basic profile describes an asset's \emph{basic value} concerning all external factors and it is independent of the strategies taken within the game between the game players and the platform. Let $v(j, t)$ denote the basic value of the asset $j$ at time $t$, where $j = 1, \ldots, K$ and $t = 1, \ldots, T$. The basic profile can be predicted and thus, we have a mild assumption: \emph{All the game players and the platform have the basic profile information}. At a time $t$, if there are $\alpha(j, t)$ game players want to buy the asset $j$, then the asset $j$'s value is increased by $c \frac{\alpha(j, t)}{N}$ times of its basic value where $c > 0$ is a constant depending on the game. We use $v^*(j, t) = v(j, t) \left(1 + c \frac{\alpha(j, t)}{N}\right)$ to denote the \emph{traded value} of NFTs for the asset $j$. Note $v^*(j, t) = v(j, t)$ if $\alpha(j, t) = 0$ and $v^*(j, t) = (1 + c) v(j, t)$ if $\alpha(j, t) = N$. The traded values, rather than the basic values, are the assets' price for the game player at a time. Equivalently to say, if a player takes the action $BUY(j)$ successfully to get an asset $j$ at time $t$ and he/she sells via $SELL(j)$ this asset immediately to the platform, then he/she immediately gets a payoff $v^*(j, t) - v(j, t) = c \frac{\alpha(j, t)}{N}$. We can regard $c \frac{\alpha(j, t)}{N}$ as the incentives offered by the platform. (We can extend this trade value to $v^*(j, t) = v(j, t) \left(1 + c \frac{\alpha(j, t)}{N}\right)$ minus the total incentive NFTs.)

Fourthly, we examine the \emph{payoffs} for the game players and the platform. A game player \emph{objective} is to earn as many NFTs as possible through playing the game. Equivalent to say, a game player tries to maximize the total NFTs represented by the gaming assets he/she owns. The game designer's \emph{objective} is to maximize the total traded values of assets $\sum^K_{j = 1} \sum^T_{t = 1} v^*(j, t)$.
\vspace{-5pt}

\subsubsection{An incentive mechanism in a P2E game.}

In this section, we describe an incentive mechanism. Consider a time $t \in [1, T]$ and an asset $j$. The incentive mechanism consists of two parts:
\begin{enumerate}
    \item Assume $\alpha(j, t) = 0$: If the platform $P$ owns the asset $j$, then both the game players and $P$ take the strategy $NULL$. If the game player $i$ who owns the asset $j$, if any, sells the asset to the platform $P$ at the price $v(j, t)$. The platform $P$ takes the strategy $PBUY(j)$ at $v(j, t)$.

    \item Assume $\alpha(j, t) \ge 1$: The platform $P$, with a probability $\frac{1}{\alpha(j, t)}$ sells the asset $j$ to the game player $i$ who takes the action $BUY(j)$. The game player $i$ gets the asset with a value $v^*(j, t) = v(j, t) + v(j, t) \frac{c \alpha(j, t)}{N}$. In this situation, we regard that $v(j, t) \frac{c \alpha(j, t)}{N}$ as the \emph{incentive NFTs} issued by the platform for the asset $j$ given to the game player $i$. If the game player $i'$ ($\neq i$) owns the asset $j$, if any, then $i'$ must sell the asset to the platform $P$ at the price $v(j, t)$. The platform $P$ takes the action $PBUY(j)$ to buy the asset $j$ at the price $v(j, t)$.
\end{enumerate}

In the following, we prove that an equilibrium exists for this game. Before we prove the result, we show that there exists a plan for a game player to maximize his/her payoff, assuming there are no other game players. 

\begin{theorem}
Assume there is only one player and he/she has the basic profile for all the assets $j$ in the interval $[1, T]$, where $j = 1, \ldots, K$. Then a plan exists for a game player to maximize his/her total payoff in this interval.
\label{thm:1}
\end{theorem}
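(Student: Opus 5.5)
The approach is to reduce the statement to an optimization over a finite set of plans, and then argue either by finiteness or, more constructively, by dynamic programming.

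With a single game player, $\alpha(j,t)\in\{0,1\}$ for every asset $j$ and time $t$. The player's action in slot $t$ is drawn from the finite set $S(i)=\{NULL,SELL(j),BUY(j)\}$. Under the mechanism, the outcome of each action is deterministic, because with $\alpha(j,t)=1$ the platform sells $j$ to the player with probability $1/\alpha(j,t)=1$. The basic profile $v(j,t)$ is assumed known to the player for all $j\le K$ and $t\in[1,T]$.

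A plan is therefore a sequence $(a_1,\ldots,a_T)$ with $a_t\in S(i)$. The set of all plans has at most $(2K+1)^T$ elements, which is finite. Given the profile, each plan determines the holdings in every slot. Each plan also determines a total payoff $\Pi(a_1,\ldots,a_T)$, consisting of the values of assets held at the end plus the incentive gains $c\,v(j,t)/N$ collected on each $BUY(j)$, minus the prices paid and plus the prices received. Hence $\Pi$ is a real-valued function on a finite nonempty set, and it attains a maximum. A maximizing plan is exactly the plan claimed in Theorem~\ref{thm:1}.

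The main step I expect to need care is the feasibility and well-definedness of the payoff. An action $SELL(j)$ is only admissible if the player currently owns $j$, and $BUY(j)$ only if the platform owns $j$. When the player owns $j$ and no one bids ($\alpha(j,t)=0$), the mechanism forces a sale at $v(j,t)$. I would handle this by encoding the state as the subset $H_t\subseteq\{1,\ldots,K\}$ of assets the player holds at the start of slot $t$. Inadmissible actions are treated as $NULL$, and forced sales are applied as the mechanism prescribes. Under this convention every sequence maps to a unique well-defined state trajectory and payoff, so the finite-maximum argument goes through. Since the feasible set always contains the all-$NULL$ plan, it is nonempty.

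To make the result constructive, I would run backward induction on the state space $2^{\{1,\ldots,K\}}\times\{1,\ldots,T\}$. Define $V_{T+1}(H)$ as the terminal value of the holdings $H$. Then set
\[
V_t(H)=\max_{a\in S(i)}\bigl[g_t(H,a)+V_{t+1}(H')\bigr],
\]
where $g_t(H,a)$ is the one-slot gain (price differences plus the incentive $v(j,t)\,c\,\alpha(j,t)/N$) and $H'$ is the resulting holding set. Choosing a maximizing action at each $(H,t)$ yields an optimal plan $\pi^*$ whose payoff is $V_1(\emptyset)$. This also shows the optimum is computable in $O(T\,2^K K)$ time.
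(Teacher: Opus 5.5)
Your proof is correct, but it takes a different route from the paper. The theorem only asserts existence, and your observation settles it: a lone player faces a finite, nonempty set of at most $(2K+1)^T$ deterministic plans, so any well-defined payoff attains its maximum. This argument does not depend on the loosely specified accounting rules. The paper never makes this argument and argues constructively instead. It treats each asset separately and lets $OPT(j,t)$ be the best profit obtainable by selling $j$ in slot $t$. It then uses the multi-transaction stock-trading recurrence $OPT(j,t)=\max_{1\le t'<t}\bigl[OPT(j,t')+v(j,t)-\min_{t'\le t''<t}v(j,t'')\bigr]$ on the basic values. That costs $O(T^2K)$ rather than your $O(T\,2^KK)$; your one-action-per-slot reading of $S(i)$ couples the assets, so you cannot decompose by asset. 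More importantly, the paper's recurrence outputs the buy slots $X(j)$ and sell slots $Y(j)$. Theorem~\ref{thm:platform} places the incentives $w(j,t'')$ on these slots, and Theorem~\ref{thm:3} builds $Q(j)$ from them. One caveat about your formalization: you make $BUY(j)$ inadmissible while you own $j$ and force a sale whenever $\alpha(j,t)=0$. Under these rules, every purchased asset is sold in the very next slot. The paper instead lets the player keep $j$ from a slot in $X(j)$ until the next slot in $Y(j)$, treating the intermediate slots as the player taking $BUY(j)$. That is precisely the behavior the incentives of Theorem~\ref{thm:platform} are meant to induce. Your backward induction therefore optimizes over a narrower plan space and would not reproduce the paper's trading points. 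This is harmless for the bare existence claim, but it matters if you want your optimal plan to feed the later results.
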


\begin{proof}
Consider a game player and an asset $j$. Let $OPT(j, t)$ denote the profit that the player makes at the end of the time slot $t$ by selling the asset $j$ to the platform in the time slot $t$. In this time slot $t$, $\alpha(j, t) = 0$. The time slots are partitioned into two types: In one type of time slots $T_1$, we have $\alpha(j, t_1) = 1$ where $t_1 \in T_1$ and in the other type of time slots $T_2 = [1, \ldots, T] \setminus T_1$, we have $\alpha(j, t_2) = 0$ where $t_2 \in T_2$. We thus have a recurrence
{\setlength{\abovedisplayskip}{0pt}
\setlength{\belowdisplayskip}{0pt}
\begin{align}
OPT(j, 0) &= 0,\notag\\
OPT(j, t) &= \max_{1 \le t' < t} \Bigl[OPT(j, t') \nonumber \\
&\quad + \Bigl(v(j, t) - \min_{t' \le t'' < t} v(j, t'')\Bigr)\Bigr],\quad t \ge 1.\notag
\end{align}}

The optimal solution is $\max_{t \in [1, T]} OPT(j, t)$. The algorithm finding the best trading time slots to maximize the total payoff of a game player is described in Algorithm~\ref{alg:points}. From this algorithm, we get the best trading time slots and the payoff that can be obtained at these trading points.

\begin{algorithm}[t]
\caption{Identify trading points for an asset $j$}
\label{alg:points}

\For{$j = 1$ \KwTo $K$}{
    
    Define $OPT(j, t)$ denote the profit that the player $i$ makes at the end of the time slot $t$ by selling the asset $j$ in the time slot $t$\;

    $X(j) \leftarrow \emptyset$\;
    $Y(j) \leftarrow \emptyset$\;
    $OPT(j,0) \leftarrow 0$\;

    \For{$t = 1$ \KwTo $T$}{
        
        Compute $OPT(j,t)$ according to Eq.~(\ref{eq:points})\;

        \If{$v(j,t) - \min_{t' \le t'' < t} v(j,t'') > 0$}{
            Find $t^*$ that achieves the minimum\;
            $X(j) \leftarrow X(j) \cup \{t^*\}$\;
            $Y(j) \leftarrow Y(j) \cup \{t\}$\;
        }
    }
}

Return $X(j)$, $Y(j)$, and $OPT(j,y(j))$ where $y(j)\in Y(j)$\;

\end{algorithm}

\begin{equation}
OPT(j,t)=
\max_{1 \le t' < t}
\left[
OPT(j,t')+
\left(
v(j,t)-
\min_{t' \le t'' < t} v(j,t'')
\right)
\right]
\label{eq:points}
\end{equation}

This dynamic programming algorithm has its running time $O(T^2 K)$.
\end{proof}
In the following, we calculate the incentive NFTs for the platform to issue so that the platform has the plan of maximizing his/her payoff. Note that in the single game-player setting, the traded value is $v^*(j, t) = (1 + c) v(j, t)$ (since $\alpha(j, t) = N = 1$) if the player takes the action $BUY(j)$  and is $v^*(j, t) = v(j, t)$ (since $\alpha(j, t) = 0, N = 1$) if the player takes the action $SELL(j)$ or $NULL$, for any time $t$ and any asset $j$.

From Algorithm~\ref{alg:points}, we know that buying time slots and selling time slots appear alternatively. There is nothing that a platform can do to incentivize a game player during the period from the last selling point in $Y(j)$ to the end of this game interval $T$. However, if we make Equation~(\ref{eq:points}) hold for every time slot $t''$ which is between a buying time slot in $X(j)$ and its following selling time slot in $Y(j)$, then the player is incentivized to buy the asset $j$ in all time slots in this period. Thus, we have the following result.

\begin{theorem}
Consider a time slot $t$. For a gaming asset $j$, the game designer takes the action $PSELL(j, w(j, t''))$ in a time slot $t'' \notin \left(X(j) \cup Y(j)\right)$ so that for any neighboring time slots $t' \in X(j)$ and $t \in Y(j)$ ($t'$ and $t$ are two neighboring time slots in $X(j) \cup Y(j)$), we have
% \vspace{-20pt}
\footnotesize{\begin{align}
OPT(j, t') + \left(v(j, t) - v(j, t'') + w(j, t'')\right) = OPT(j, t) & & 1 \le t' < t 
\end{align}}
\label{thm:platform}
\end{theorem}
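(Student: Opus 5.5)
This statement is really a construction: we must show that the platform can choose the incentive amounts $w(j,t'')$ so that the displayed identity holds, and that this choice is admissible, i.e.\ $w(j,t'')\in\{0,1,\ldots,U\}$. So I would define $w$ explicitly and then check it.

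\textbf{Defining $w$.} Fix an asset $j$ and run Algorithm~\ref{alg:points} to get $X(j)$ and $Y(j)$, together with the values $OPT(j,\cdot)$ from Eq.~(\ref{eq:points}). By the remark before the statement, buying and selling slots alternate, so $X(j)\cup Y(j)$ splits into consecutive pairs $t'<t$ with $t'\in X(j)$ and $t\in Y(j)$. For each slot $t''$ strictly between such a pair (so $t''\notin X(j)\cup Y(j)$), I would solve the required identity for $w$:
\[
w(j,t'') := OPT(j,t) - OPT(j,t') - v(j,t) + v(j,t'').
\]
With this choice the displayed equation holds by direct substitution. For slots after the last selling point, or before the first buying point, the platform takes $NULL$ or $w=0$, which matches the paper's remark that nothing can be done there.

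\textbf{Checking $w\ge 0$.} This is the main obstacle. Since $t'$ was chosen by the algorithm as the argmin of $v(j,\cdot)$ over the relevant window, the recurrence gives
\[
OPT(j,t)\ \ge\ OPT(j,t') + v(j,t) - v(j,t').
\]
Hence
\[
w(j,t'') \ \ge\ v(j,t'') - v(j,t') \ \ge\ 0,
\]
because $v(j,t')$ is the minimum over $[t',t)$ and $t''$ lies in that range.

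\textbf{Checking $w\le U$ and integrality.} The same chain bounds $w$ above by a difference of OPT values and basic prices, all of which lie in $[L(j),U(j)]$. I expect I will need to argue $OPT(j,t)-OPT(j,t')\le v(j,t)-v(j,t')$, which would give $w\le v(j,t'')\le U(j)\le U$. That inequality is plausible because, in the single-player setting, the optimal plan does not trade between neighboring points, so the recurrence is attained at $t'$. If it fails, I would restrict the claim to the admissible values, or round to integers when prices are integral. Finally, I would note that the construction is independent across assets $j$, so taking the platform's action $PSELL(j,w(j,t''))$ for every $j$ and every such $t''$ proves the theorem.
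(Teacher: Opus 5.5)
Your construction, which solves the displayed identity for $w(j,t'')$ on each gap between a slot $t'\in X(j)$ and the next slot $t\in Y(j)$, is what the paper intends. The paper gives no separate proof. It offers only the preceding remark that the platform chooses $w(j,t'')$ so that Eq.~(\ref{eq:points}) holds, and it never checks that this $w$ is an admissible action. So for the paper's claim you are done after the substitution.

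Your extra nonnegativity check is sound, with one correction. $t'$ need not be the minimiser found at step $t$ of Algorithm~\ref{alg:points}; it may have been added as $t^*$ at a later selling step $s$. But no slot of $X(j)\cup Y(j)$ lies in $(t',t)$, so $s\ge t$, and $v(j,t')$ is still the minimum over $[t',t)$. Similarly, the algorithm does not guarantee strict alternation of $X(j)$ and $Y(j)$ (rising prices put consecutive slots into $Y(j)$). You do not need it, though, since each $t''$ lies in at most one gap of the sorted set $X(j)\cup Y(j)$.

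The upper-bound step, however, fails as sketched, because $OPT(j,t)-OPT(j,t')\le v(j,t)-v(j,t')$ is false in general. The maximiser in Eq.~(\ref{eq:points}) for $OPT(j,t)$ is typically the previous \emph{selling} slot, not the buying slot $t'$. Meanwhile $OPT(j,t')$ is the profit from selling at a price trough, so it can be much smaller.
\begin{itemize}
\item For $v(j,\cdot)=(1,10,1,10)$, with $OPT(j,1)=0$, you get $OPT(j,\cdot)=(0,9,0,18)$, $X(j)=\{1,3\}$ and $Y(j)=\{2,4\}$. Then $OPT(j,4)-OPT(j,3)=18>9=v(j,4)-v(j,3)$.
\item For $v(j,\cdot)=(1,10,1,1,10)$, break the tie at step $5$ so that $t^*=3$. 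The gap slot $t''=4$ then receives $w(j,4)=18-0-10+1=9>v(j,4)=1$, so the bound $w\le v(j,t'')$ is false.
\end{itemize}

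A working argument uses a different comparison.
\begin{itemize}
\item Every slot strictly inside the gap has $v(j,t'')=v(j,t')$. Otherwise the first strict rise would pass the test in Algorithm~\ref{alg:points} and enter $Y(j)$.
\item This lets you pick a maximiser $r\le t'$ for $OPT(j,t)$. If a maximiser lies in $(t',t)$, any maximiser for $OPT(j,r)$ is again one for $OPT(j,t)$, because all prices on $[t',t)$ coincide, so you can descend.
\item If $r=t'$, then $w=0$.
\item If $r<t'$, apply the recurrence for $OPT(j,t')$ at the same $r$. This gives $w(j,t'')\le\max\{0,\min_{r\le u<t'}v(j,u)-v(j,t')\}\le U(j)-L(j)\le U$, which is tight in the second example.
\end{itemize}
Integrality is automatic when prices are integers.
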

Theorem ~\ref{thm:platform} gives the platform a way of incentivizing the game players to take the action $BUY(j)$. Let us consider a multiple-player scenario. We argue in the following that the $PSELL(j, w(j, t))$ policy (with a probability $\frac{1}{\alpha(j, t)}$, taking the action $PSELL(j, w(j, t))$ to sell the asset $j$ to the game player $i$ who takes the action $BUY(j)$) ensures that all players are incentivized and they reach an equilibrium when $w(j, t) = c \frac{\alpha(j, t)}{N}$.

\begin{theorem}
Assume there are multiple, say $N$, players. There exists an equilibrium for all the game players to take the strategy $BUY(j)$ in all the time slots $X(j)$, $Y(j)$, and those time slots in which the platform releases incentive tokens.
\label{thm:3}
\end{theorem}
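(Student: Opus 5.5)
The plan is to exhibit the profile in which every game player takes $BUY(j)$ in every slot of $X(j)\cup Y(j)$ and in every slot where the platform releases incentive NFTs. We then verify that no single player gains by deviating, treating each asset $j$ separately as in Theorem~\ref{thm:1}. The platform's behaviour is fixed by the incentive mechanism. With probability $\frac{1}{\alpha(j,t)}$ it sells $j$ to one of the $\alpha(j,t)$ buyers, with incentive $w(j,t)=c\frac{\alpha(j,t)}{N}$ (scaled by $v(j,t)$ as in the mechanism). Any non-winning holder must sell back at $v(j,t)$.

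\textbf{Step 1: payoffs under the profile.} Under the candidate profile we have $\alpha(j,t)=N$ in every designated slot, so the traded value is $v^*(j,t)=(1+c)v(j,t)$. Each player wins the asset with probability $1/N$. The expected payoff of player $i$ in such a slot is therefore
\begin{align}
\frac{1}{N}\Bigl(v^*(j,t)-v(j,t)\Bigr)=\frac{c\,v(j,t)}{N}\ge 0,
\end{align}
and this is added on top of the trading gains along the buy/sell points computed by Algorithm~\ref{alg:points}. By Theorem~\ref{thm:platform}, the incentive $w(j,t'')$ in the intermediate slots $t''$ is chosen so that buying at $t''$ still reaches $OPT(j,t)$ at the following sell point $t\in Y(j)$. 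Consequently, a player who buys in every designated slot attains the single-player optimum of Theorem~\ref{thm:1}, now in expectation over the lottery.

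\textbf{Step 2: unilateral deviations.} Fix player $i$ and suppose the others keep playing $BUY(j)$. If $i$ switches to $NULL$ or $SELL(j)$ in a designated slot, then $\alpha(j,t)$ drops to $N-1$. Player $i$ loses the $1/N$ chance of the incentive term and gains nothing, because selling only ever yields the basic value $v(j,t)$. The trading component can only be no larger, since by Theorem~\ref{thm:1} the recurrence~(\ref{eq:points}) already selects the profit-maximising times. Deviating to $BUY(j)$ in a non-designated slot (one after the last point of $Y(j)$, or where no incentive is released) gives no incentive gain. It also forces a sale at $v(j,t)$ later, with no price rise available, so it too is weakly unprofitable. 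Hence no deviation strictly improves $i$'s expected payoff, and the profile is an equilibrium. The platform has no incentive to deviate either, since $\alpha(j,t)=N$ maximises $v^*(j,t)$ and hence its objective $\sum_j\sum_t v^*(j,t)$.

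\textbf{Main obstacle.} The hardest point is making the deviation comparison rigorous across time, because a deviation in one slot changes which assets $i$ holds later. I would first try to handle this by noting that the mechanism forces every non-winner to sell back at $v(j,t)$ in each slot. This resets holdings each slot, so the multi-period game decomposes into independent one-shot slot games, and the comparison in each slot reduces to $\frac{c\,v(j,t)}{N}\ge 0$ versus $0$.
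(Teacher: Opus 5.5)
Your skeleton matches the paper's: fix $j$, compare $BUY(j)$ with $NULL$ in each slot of $Q(j)$ using the win probability $1/\alpha(j,t)$ and the incentive $c\,\alpha(j,t)/N$, then note that the platform's objective is maximised at the largest $\alpha(j,t)$. The gap is in how you handle the price-change term.

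In Eq.~(\ref{eq:expected}), a purchase at $t$ contributes $\frac{1}{\alpha(j,t)}\bigl(v(j,t')-v(j,t)\bigr)+\frac{c}{N}$. The paper concludes that $NULL$ is never better only under an explicit side condition on the sign of $v(j,t')-v(j,t)$, stated right after that display. You replace this with the claim that deviating cannot raise the trading component ``since by Theorem~\ref{thm:1} the recurrence already selects the profit-maximising times.'' That does not follow, for two reasons:
\begin{itemize}
\item The designated slots contain every point of $Y(j)$, and the plan of Theorem~\ref{thm:1} \emph{sells} at $Y(j)$ rather than buys there.
\item Theorem~\ref{thm:1} is a single-player statement in which every purchase succeeds, so it says nothing about a lottery-diluted purchase.
\end{itemize}
Concretely, take $t\in Y(j)$ followed by a drop to a later price $v(j,s)<v(j,t)$. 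Suppose the others play $BUY(j)$ and you hold $j$. Then $NULL$ triggers the forced sale at $v(j,t)$, since $\alpha(j,t)=N-1\ge 1$. Under $BUY(j)$, with probability $1/N$ you keep the asset, collect the incentive, and later resell at $v(j,s)$. In your scaling the difference is $\frac{1}{N}\bigl(c\,v(j,t)-(v(j,t)-v(j,s))\bigr)$, and the same expression arises if you do not hold $j$. This is negative as soon as the drop exceeds $c\,v(j,t)$. So $NULL$ is a profitable deviation there, and nothing in your argument rules it out.

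The slot-by-slot decomposition in your last paragraph does not repair this. If holdings reset every slot, a unit won at $t$ is resold at the next slot's basic value. The one-shot gain is then $\frac{1}{N}\bigl(v(j,t+1)-v(j,t)+c\,v(j,t)\bigr)$, not $\frac{c\,v(j,t)}{N}$: the price term has been dropped, not eliminated. The reset picture also contradicts your Step~1, which credits the player with gains from holding the asset from a point of $X(j)$ to a later point of $Y(j)$.

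To finish, you need an extra hypothesis, in one of two forms:
\begin{itemize}
\item adopt the paper's sign condition on the relevant differences $v(j,t')-v(j,t)$; or
\item require the incentive to dominate every price drop, for instance $c\,v(j,t)\ge v(j,t)-v(j,t')$ at each designated $t$. This is the role the choice of $w$ in Theorem~\ref{thm:platform} is meant to play.
\end{itemize}
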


\begin{proof}
Consider the game players and an asset $j$. If Theorem~\ref{thm:3} holds for one asset $j$, then it holds for all assets. Consider all the time slots $Q(j)$ including $X(j)$, $Y(j)$, as well as those time slots that the platform issues incentive NFTs. To prove Theorem~\ref{thm:3}, we only need to show in the following that any player does not want to take the action $NULL$ in one of the time slots in $Q(j)$.

We use indicator variables to calculate a game player's expected payoff in the time slots $Q(j)$. Define
\[
Z(i,t,t') =
\begin{cases}
1, & 
\begin{aligned}
& \text{if player $i$ takes action $BUY(j)$ for some $j$} \\
& \text{at time slot $t \in Q(j)\setminus Y(j)$ and takes action} \\
& \text{$SELL(j)$ at time slot $t'$ with $t' > t$}
\end{aligned}
\\[6pt]
0, &
\begin{aligned}
& \text{otherwise (player $i$ takes $NULL(j)$} \\
& \text{at time slots $t$ and $t'$)}
\end{aligned}
\end{cases}
\]

Let $W(i, j)$ denote the expected payoff for the game player $i$ on playing the gaming asset $j$. Then we have
\begin{align}
W(i,j) = & \sum_{\substack{t \in Q(j)\\ t' \in Y(j)\\ t' > t}}
\mathbb{E}\!\left[ Z(i,t,t') \left( v(j,t') - v(j,t) + c\frac{\alpha(j,t)}{N} \right) \right] \notag\\
= & \sum_{\substack{t \in Q(j)\\ t' \in Y(j)\\ t' > t}}
\Biggl( \frac{1}{\alpha(j,t)}\bigl( v(j,t') - v(j,t) \bigr) \notag\\
  & \qquad + \frac{c}{N}\,\mathbb{E}\!\left[ Z(i,t,t') \,\alpha(j,t) \right] \Biggr) \notag\\
= &
\begin{cases}
\displaystyle
\sum_{\substack{t \in Q(j)\\ t' \in Y(j)\\ t' > t}}
\left( \dfrac{1}{\alpha(j,t)}\bigl( v(j,t') - v(j,t) \bigr) + \dfrac{c}{N} \right),
& \\\text{if $i$ takes $BUY(j)$ at time $t$}, \\[6pt]
0, & \\\text{if $i$ takes $NULL$ at time $t$}.
\end{cases}
\label{eq:expected}
\end{align}
Equation~(\ref{eq:expected}) implies that to maximize a game player's payoff, he/she should not take the $NULL$ action in those $Q(j)$-time slots $t$, given $v(j, t') < v(j, t)$. Therefore, an equilibrium exists for the game players only. Also, the platform has the objective of
\[
\max \sum_{t \in Q(j)} v^*(j, t)
= \max \sum_{t \in Q(j)} \left(v(j, t) + c \frac{\alpha(j, t)}{N}\right)
\]
for the gaming asset $j$. This objective is maximized when $\alpha(j, t)$ is maximized. Thus, the platform and all the game players reach an equilibrium.
\end{proof}
\subsection{Simulation results}\label{sec:sml}

Based on the incentive model above, we develop a simulator in Matlab based on the data in Table~\ref{tab:profits},  and calculate the constant $c$ when no incentive NFTs exists in the games (\ie $w(j,t) = 0$ where $j = 1,...,K$ and $t = 1,...,T$). We exclude the games with existing incentive mechanism when trading the NFTs, \eg SOR rewards players whose monthly trading volume has reached a threshold with utility tokens, thus, we exclude it from the simulation. Since the NFTs are sold from the game platform to the players, we set the initial basic profile as the average selling profits,\ie the basic profile when $t = 1$. When $t = 2,...,T$, the basic profile of the assets is evenly changed based on the difference between the average selling profits and the average trade price. In other words, when t = T, the basic profile of the assets is the average trade price. Then we run our model with the same set of the basic profile, the calculated $c$, and the incentive, where the incentive amount is calculated based on  Theorem~\ref{thm:platform}. We set N = 100, K = 100, L =0, U = 10000, and T = 12 as the parameter when we run our model.

\begin{table*}
\centering
\footnotesize
\caption{Model simulation result with and without incentives}
\label{tbl:simulation}
\begin{tabular}{c|c|c|c||c||c||c||c}
Game&Sell profit&Trade Price&Profits&\makecell{Simulated \\Profits (without w)}&	c &\makecell{Simulated \\Profits (with w))}&Improvement (\%)\\\hline
BLNKS&	282.81&	361&	19.22&	19.25&	0.08&	22&	14.29\\
CK&	22.93&	35.57&	78.64&8.91&	0.95&	115.22&	46.01\\
LOKL&	933.46&	1315.28&	721.38&	719.21&	0.67&	2870.81&	299.16\\
LOKR&	13.54&	32.01&	82.18&	81.6&	1.1&	195.96&	140.15\\
TANKS&	465.82&	473.61&	207.07&	206.11&	0.61&	372.28&	80.62\\
LANDS&	6732.83&	6148.78&	156.24&	156.17&	0.04&	210.21&	34.60\\
AXIE&	331.65&	203.43&	196.93&	193.65&	0.98&	391.55&	102.19\\

\end{tabular}
\end{table*}

Table~\ref{tbl:simulation} shows the simulation results. We can see that when no incentive is used, the simulated average profits are within 2\% of the real-world profits, indicating our model can precisely abstract the trading activities in the NFT games. Furthermore, the simulated profits when the incentive is employed show that our incentive mechanism can effectively improve the profits that players can earn, which suggests the incentive amount based on our model can successfully encourage trading activities and increase the rarity of the assets (\ie $\alpha$).
\section{Discussion}
\label{sec:discussion}

\noindent\textbf{Limitation of our study.} With a lot of NFT games in the wild, we only scratched the surface and focus on  NFT games with full transaction history from public sources, e.g., the Ethereum blockchain. 
Furthermore, while some data is available for a game, it may not be available for other games. 
As a result, we have to exclude them during the study.
Also, the data we have collected from public sources often are of coarse granularity without detailed breakdowns. 
Such obstacles have hindered us to paint a global picture of NFT games. As such, while we analyze the data, we have also mainly focused on the these popular games that we have more and complete information. 
As it is still in the early stage of NFT games, we hope our study can still provide some insights regarding the play-to-earn model and areas to improve in the future.

\noindent\textbf{What can be improved for P2E?} Blockchain support enables the P2E, which is a great incentive to players. However, based on the preliminary analysis we have conducted in this study, P2E in reality is not (yet) working as it has been designed for or is not achieving its original goal.
While this may be due to the premature nature of these games, however, if this trend continues in the future, it may lose attraction to players. A major observation we have found from the analysis is that there are always some dominating players owning a significant portion of the NFTs in the game while only a small percentage of NFTs are distributed among other ordinary players.
This discourages the participation of ordinary players and is in the opposite direction of Internet democracy which can be naturally supported by the blockchain. 
From the profitability model and the TLA+ model, we can conclude that the estimated players' profit is related to the NFT circulation rate and the average NFT price. Incentive mechanisms would make a positive impact on the sustainability of NFT games, but cannot guarantee an increase of profits for the ordinary players. We would like to further explore how much the external factors (\ie big events in the big days) can influence the NFT price in the future, which completes the final piece of the puzzle regarding the players' profits.
\section{Related Work}
\label{sec:related}

The increasing popularity of NFT games has attracted some prior studies. 
For example, Vidal-Tomás~\cite{5vidal2022new} analyzed the play-to-earn NFT games via social media data as trends on social media can reveal the attention the game received. Boonparn et al.~\cite{boonparn2022social} developed a web application that can retrieve data from social media to analyze NFT games. 
Similarly, Francisco \etal~\cite{7francisco2022perception} performed a qualitative analysis based on the interviews they conducted with the Axie Infinity players. 
Francisco \etal~\cite{7francisco2022perception} confirmed the risks on financial instability from the Axie Infinity players.

Alam~\cite{alam2022understanding} also studied Axie Infinity by fetching the data from the blockchain it resides (\ie Ronin~\cite{ronin}%\rz{reference?}
) and analyzing the transaction history,  and found that the top 1\% of the wallets earned 60.1\% of total profits in the marketplace of the exchange between SLP and WETH. Alam~\cite{alam2022understanding} believed this is caused by the high payoff rate at the early stage of Axie Infinity and suggested the concerns of a possible Ponzi scheme. 
Delic \& Delfabbro~\cite{delic2022profiling} conducted a qualitative analysis with the players of Axie Infinity and suggested that the close monitoring of NFT games is necessary, particularly regarding the potential exploitation of labor, as players might be enticed to prioritize grinding for rewards over playing for pure enjoyment.
Seifoddini~\cite{seifoddini2022multi} developed a framework for rating the NFT games using multiple criteria (\eg the token market cap,  the security score published by a security monitor company of DeFi projects, and whether the NFT game has a frequently updated white paper/green paper). 
However, some of the criteria (\eg the number of Google Scholar citations of the game) cannot be used as a piece of strong evidence showing the success of an NFT game.
Recently, Yang \& Wang~\cite{yang2023non} performed a literature review of 23 NFT game articles and found that social media frequently appeared in these articles and that the majority of the articles proposed some initial ideas, and more consideration is needed, \eg the criticisms of Ponzi schemes.  
Prior researchers~\cite{ecr21scholten2019ethereum,ecr22lee2019blockchain,ecr24fritsch2024analyzing,ecr25mohima2023exploring} noticed the low earning rate in some NFT games and even compared it with gambling and speculation, however, these works focused on one game only and can hardly conclude the behaviors of the industry. We have the same opinion on these works about the current P2E from a comprehensive dataset from 12 NFT games, but the opposite perspective on the future of P2E. We have proved that the failure of the current P2E can be reverted by an optimized incentive mechanism, which is the focus of our study.

\section{Conclusion}
\label{sec:conclusion}

Along with cryptocurrencies, NFTs and NFT games have emerged in recent years.
NFT games, supported by blockchain techniques, have enabled a new model, play-to-earn, which sharply differs from traditional games where users often need to pay-to-play. While numerous users have been attracted to various NFT games, little is known about its underpinnings. In this study, we have focused on understanding the effectiveness of the play-to-earn model in practice. 
Our empirical study leads to several findings that could provide valuable insights into the development of NFT games in the future. 
We found that current incentives provide players merely earnings thus should be enhanced. Motivated by the findings, we have explored the 
incentive mechanism of NFT games and proposed an incentive mechanism model to evaluate the impact of the incentive mechansim.  Our model results show that proper incentives are notably effective in improving the profits that players earn from these NFT games. 

\bibliographystyle{IEEEtran}
\bibliography{sample}

@String{BIT = "{BIT}" }

@String{Computing = "Computing" }

@String{Computer = "{IEEE} Computer" }

@String{Springer = "Springer-Verlag" }

@article{4delfabbro2022understanding,
  title={Understanding the mechanics and consumer risks associated with play-to-earn (P2E) gaming},
  author={Delfabbro, Paul and Delic, Amelia and King, Daniel L},
  journal={Journal of Behavioral Addictions},
  volume={11},
  number={3},
  pages={716--726},
  year={2022},
  publisher={Akad{\'e}miai Kiad{\'o} Budapest}
}

@article{5vidal2022new,
  title={The new crypto niche: NFTs, play-to-earn, and metaverse tokens},
  author={Vidal-Tom{\'a}s, David},
  journal={Finance research letters},
  volume={47},
  pages={102742},
  year={2022},
  publisher={Elsevier}
}

@inproceedings{6boonparn2022social,
  title={Social data analysis on play-to-earn non-fungible tokens (NFT) games},
  author={Boonparn, Pahukun and Bumrungsook, Phuriput and Sookhnaphibarn, Kingkarn and Choensawat, Worawat},
  booktitle={2022 IEEE 4th Global Conference on Life Sciences and Technologies (LifeTech)},
  pages={263--264},
  year={2022},
  organization={IEEE}
}

@article{nadini2021mapping,
  title={Mapping the NFT revolution: market trends, trade networks, and visual features},
  author={Nadini, Matthieu and Alessandretti, Laura and Di Giacinto, Flavio and Martino, Mauro and Aiello, Luca Maria and Baronchelli, Andrea},
  journal={Scientific reports},
  volume={11},
  number={1},
  pages={20902},
  year={2021},
  publisher={Nature Publishing Group UK London}
}

@article{7francisco2022perception,
  title={The perception of Filipinos on the advent of cryptocurrency and non-fungible token (NFT) games},
  author={Francisco, Ryan and Rodelas, Nelson and Ubaldo, John Edison},
  journal={arXiv preprint arXiv:2202.07467},
  year={2022}
}

@article{alam2022understanding,
  title={Understanding the economies of blockchain games: An empirical analysis of Axie Infinity},
  author={Alam, Omar},
  journal={Distributed Computing Group Computer Engineering and Networks Laboratory ETH Z{\"u}rich.—2022.—URL: https://pub. tik. ee. ethz. ch/students/2},
  year={2022}
}

@INPROCEEDINGS{430-1,
  author={Karapapas, Christos and Syros, Georgios and Pittaras, Iakovos and Polyzos, George C.},
  booktitle={2022 4th Conference on Blockchain Research \& Applications for Innovative Networks and Services (BRAINS)}, 
  title={Decentralized NFT-based Evolvable Games}, 
  year={2022},
  volume={},
  number={},
  pages={67-74},
  doi={10.1109/BRAINS55737.2022.9909178}
  }

@misc{430-3, 
title={NFT Game Stats 2023}, 
url={https://bithotel.io/blog/nft-game-stats-2023/\#:~:text=So the market will double in size through, games in development at the time of writing.}, 
journal={Bit Hotel}, 
author={MarketingDep}, 
year={2023}, 
month={Jan}
}

@misc{Axie,
  author = {Sky Mavis},
  title = {Axie Infinity - Battle, Collect, and Trade Collectible NFT Creatures},
  note = {https://axieinfinity.com, Accessed Date: 2023-05-19},
  year = {2023}
}

@misc{benjibananas,
  title = {Benji Bananas},
  note = "\url{https://benjibananas.com/}, Date Accessed: 2023-05-19",
  author = {Benji Bananas Ltd},
  year = {2022}
}

@misc{blankos,
  title = {Blankos},
  note = "\url{https://blankos.com/, Accessed  Date: 2023-05-19}",
  author = {Blankos},
year = {2022}
}

@misc{cryptokitties,
  title = {CryptoKitties | Collect and breed digital cats!},
  note = {https://www.cryptokitties.co/, Accessed Date: 2023-05-19},
  year = {2017},
  author = {CryptoKitties}
}

@misc{animocabrands,
  title = {Animoca Brands},
  note ={https://www.animocabrands.com, Accessed Date: 2023-07-29},
  year = {2023},
  author = {Animoca Brands}
}

@misc{decentraland,
  title = {Welcome to Decentraland},
  note = {https://decentraland.org/, Accessed Date: 2023-05-19},
 year = {2023},
author = {Decentraland}
}

@misc{ethermon,
  title = {Ethermon.io - Decentralized World of Ether Monsters},
  note ={https://ethermon.io/, Accessed Date: 2023-05-19},
  year = {2022},
author = {Ethermon}
}

@misc{F1DeltaTime,
  author = {STEVE MCCASKILL},
  title = {F1 Delta Time blockchain game shuts down},
  note = {https://www.sportspromedia.com/news/f1-delta-time-blockchain-game-shuts-down/, Accessed Date: 2023-04-05},
  year = {2022}
}

@misc{leagueofkingdoms,
  title = {League of Kingdoms | MMO Strategy Game on Blockchain},
  note  ={https://www.leagueofkingdoms.com/, Accessed Date: 20223-05-19},
  year = {2022},
author = {League of Kingdoms}
}

@misc{mycryptoheroes,
  title = {My Crypto Heroes (MCH) | Crypto game from Japan!},
  note = {https://www.mycryptoheroes.net/, Accessed Date: 2023-05-19},
  year = {2023},
author = {MCH Corporation}
}

@misc{sorare,
  title = {www.sorare.com - Official Site},
  note = {https://sorare.com/, Accessed Date: 2023-05-19},
  year = {2023},
author = {Sorare}
}

@misc{spidertanks,
  title = {Spider Tanks | Welcome to the Brawl},
  note = {https://www.spidertanks.game/, Accessed Date: 2023-05-19},
  author = {Spider Tanks},
year ={2022}
}

@misc{sandbox,
  title = {The Sandbox Game — User-Generated Crypto and Blockchain ...},
  note ={https://www.sandbox.game/en/, Accessed Date: 2023-05-19},
  author = {Sandbox},
year ={2018}
}

@article{bitcoin2008,
  title={Bitcoin: A peer-to-peer electronic cash system},
  author={Nakamoto, Satoshi},
  journal={Decentralized business review},
  pages={21260},
  year={2008}
}

@article{ethereumwhitepaper,
  title={A next-generation smart contract and decentralized application platform},
  author={Buterin, Vitalik and others},
  journal={white paper},
  volume={3},
  number={37},
  pages={2--1},
  year={2014}
}

@misc{ethereum,
  title = {ethereum.org},
  note = {https://ethereum.org/en/, Accessed Date: 2023-05-19},
  author = {Ethereum},
year = {2014}
}

@misc{ronin,
  title = {Ronin},
  note ={https://roninchain.com, Accessed Date: 2023-05-19},
  author = {Sky Mavis},
year = {2022}
}

@misc{CommunityGrants,
  title = {Community Grants},
  note ={https://docs.decentraland.org/player/general/dao/grants-v1/community-grants/, Accessed Date: 2023-05-21},
  year = {2022},
author = {Decentraland}
}

@misc{TheSandboxAmbassadorProgram,
  title = {The Sandbox Ambassador Program},
  note = {https://medium.com/sandbox-game/the-sandbox-ambassador-program-4f00bb5c9c24, Accessed Date: 2023-05-21},
  year = {2023},
author = {the Sandbox}
}

@misc{decentralandwork,
  title = {This Casino in Decentraland Is Hiring (for Real)},
  author = {Danny Nelson},
  year ={2021},
  note = {https://www.coindesk.com/tech/2021/03/18/this-casino-in-decentraland-is-hiring-for-real/ Accessed Date: 2023-05-21}
}

@misc{erc20,
  title = {ERC-20 TOKEN STANDARD},
  note = {https://ethereum.org/en/developers/docs/standards/tokens/erc-20/, Accessed Date: 2023-05-22},
  author = {Ethereum},
year = {2018}
}

@misc{erc721,
  title = {ERC-721 NON-FUNGIBLE TOKEN STANDARD},
  note ={https://ethereum.org/en/developers/docs/standards/tokens/erc-721/, Accessed Date: 2023-05-22},
    author = {{Ethereum}},
year = {2018}
}

@misc{erc1155,
  title = {ERC-1155 MULTI-TOKEN STANDARD},
  note ={https://ethereum.org/en/developers/docs/standards/tokens/erc-1155/, Accessed Date: 2023-05-22},
  author = {Ethereum},
  year = {2018}
}

@online{roninAPI,
  title = {Ronin Rest API},
   howpublished = "\url{https://documenter.getpostman.com/view/23367645/2s7YfPfEGh\#3d1a1e96-4348-4340-8235-fa9f89eed422, Accessed Date: 2023-05-22}",
  year = {2023},
author = {{Sky Mavis}}
}

@misc{yahoo,
  title = {Yahoo Finance - Stock Market Live, Quotes, Business and Finance News},
  note = {https://finance.yahoo.com, Accessed Date: 2023-05-22},
  year = {2023},
author = {Yahoo Finance}
}

@misc{etherscan,
  title = {Ethereum (ETH) Blockchain Explorer},
  note = {https://etherscan.io, Accessed Date: 2023-05-22},
  author = {Etherscan},
year = {2018}
}

@misc{coinbase,
  title = {Coinbase - Buy and Sell Bitcoin, Ethereum, and more with trust},
  note ={https://www.coinbase.com/, Accessed Date: 2023-05-22},
  author = {Coinbase},
year = {2023}
}

@article{dao,
  title={Decentralized autonomous organizations: Concept, model, and applications},
  author={Wang, Shuai and Ding, Wenwen and Li, Juanjuan and Yuan, Yong and Ouyang, Liwei and Wang, Fei-Yue},
  journal={IEEE Transactions on Computational Social Systems},
  volume={6},
  number={5},
  pages={870--878},
  year={2019},
  publisher={IEEE}
}

@inproceedings{yang2023non,
  title={Non-Fungible Token (NFT) Games: A Literature Review},
  author={Yang, Yann-Jy and Wang, Jing-Lun},
  booktitle={2023 International Conference On Cyber Management And Engineering (CyMaEn)},
  pages={251--254},
  year={2023},
  organization={IEEE}
}

@inproceedings{boonparn2022social,
  title={Social data analysis on play-to-earn non-fungible tokens (NFT) games},
  author={Boonparn, Pahukun and Bumrungsook, Phuriput and Sookhnaphibarn, Kingkarn and Choensawat, Worawat},
  booktitle={2022 IEEE 4th Global Conference on Life Sciences and Technologies (LifeTech)},
  pages={263--264},
  year={2022},
  organization={IEEE}
}

@article{delic2022profiling,
  title={Profiling the Potential Risks and Benefits of Emerging “Play to Earn” Games: a Qualitative Analysis of Players’ Experiences with Axie Infinity},
  author={Delic, Amelia J and Delfabbro, Paul H},
  journal={International Journal of Mental Health and Addiction},
  pages={1--14},
  year={2022},
  publisher={Springer}
}

@article{seifoddini2022multi,
  title={A multi-criteria approach to rating Metaverse games},
  author={SE{\.I}FODD{\.I}N{\.I}, Jalal},
  journal={Journal of Metaverse},
  volume={2},
  number={2},
  pages={42--55},
  year={2022}
}

@online{bbpE,
  title={How much can you earn from Blankos Block Party Party Pass? Party Pass Plus worth it? playblankos},
  author = {Akadonniedanko},
  url = {https://www.youtube.com/watch?v=Vr85uJAqwjQ},
  note = {Accessed Date: 2023-05-19},
  year = {2022}
}

@mis{benE,
  title={Benji Bananas Announces Grand Opening of Benji Shop},
  author = {Akadonniedanko},
  note = {https://playtoearn.net/news/benji-bananas-announces-grand-opening-of-benji-shop, Accessed Date: 2024-01-30},
  year = {2022}
}

@online{emonE,
  title={Learn How to Play Ethermon Play to Earn Game | Spintop},
  url = {https://spintop.network/gamepedia/games/ethermon\#},
  note = {Accessed Date: 2024-01-30},
author = {spintop},
year = {2024}
}

@online{dlandE,
  title={How to make money with Decentraland},
  url = {https://medium.com/@innovativeartist/how-to-make-money-with-decentraland-83c394368b5c},
  author = {David Schuyler Franco},
  urldate = {2023-04-23},
year = {2023}
}

@online{lokE,
  title={Earn by Playing League of Kingdoms in 3 Simple Ways},
  url = {https://p2enews.com/earn-by-playing-league-of-kingdoms-in-3-simple-ways/\#:~:text=While owning land is a, secure some profit while trading!},
  author = {Jaylance Lorete},
  note = {Accessed Date: 2024-01-30},
year = {2024}
}

@online{report,
  title={Non-Fungible Token (NFT) Market Size | 39.4\% CAGR},
  url = {https://market.us/report/non-fungible-token-nft-market/},
  urldate = {2025-01-31}
}

@article{borri2022economics,
  title={The economics of non-fungible tokens},
  author={Borri, Nicola and Liu, Yukun and Tsyvinski, Aleh},
  journal={Available at SSRN},
  volume={4052045},
  year={2022}
}

@inproceedings{ecr21scholten2019ethereum,
  title={Ethereum crypto-games: Mechanics, prevalence, and gambling similarities},
  author={Scholten, Oliver James and Hughes, Nathan Gerard Jayy and Deterding, Sebastian and Drachen, Anders and Walker, James Alfred and Zendle, David},
  booktitle={Proceedings of the annual symposium on computer-human interaction in play},
  pages={379--389},
  year={2019}
}

@inproceedings{ecr22lee2019blockchain,
  title={Is a blockchain-based game a game for fun, or is it a tool for speculation? An empirical analysis of player behavior in crypokitties},
  author={Lee, Jaehwan and Yoo, Byungjoon and Jang, Moonkyoung},
  booktitle={The Ecosystem of e-Business: Technologies, Stakeholders, and Connections: 17th Workshop on e-Business, WeB 2018, Santa Clara, CA, USA, December 12, 2018, Revised Selected Papers 17},
  pages={141--148},
  year={2019},
  organization={Springer}
}

@article{ecr24fritsch2024analyzing,
  title={Analyzing voting power in decentralized governance: Who controls DAOs?},
  author={Fritsch, Robin and M{\"u}ller, Marino and Wattenhofer, Roger},
  journal={Blockchain: Research and Applications},
  volume={5},
  number={3},
  pages={100208},
  year={2024},
  publisher={Elsevier}
}

@phdthesis{ecr25mohima2023exploring,
  title={Exploring attacks in the NFT gaming industry: A study of risks and mitigation strategies},
  author={Mohima, Zarin Rahman and Bin Bashar, Syed Ziaul and Muktadir, Rawnak and Hossain, Amirah and Mahmud, Shafin},
  year={2023},
  school={Brac University}
}

\appendices
\section{Background on NFT Games}
\label{sec:background-appendix}
In this section, we provide a brief background about the NFT gaming platforms. In particular, we discuss the incentive models used in NFT gaming platforms. 

% \subsection{Economical Models and Play-to-earn}
Having looked at the token types in NFT games, we now describe the economic model that underpins NFT games. In particular, we compare the NFT game incentive model known as P2E with traditional game models (\ie pay-to-play and free-to-play).

As aforementioned, NFT games are also called play-to-earn games, enabled with blockchain support.
NFT games adapt to the ecological environment of the blockchain  because
(1) the game and its in-game props, assets, and other natural virtual attributes make the blockchain a perfect platform for trading and transactions;
and (2) the play-to-earn business model can be easily implemented based on the blockchain ecosystem.

Gaming platforms, as a type of entertainment program, are essentially virtual assets, and so are the players' game accounts, props, or properties in the game account. However, the trading of assets within the gaming platforms is largely restricted by game companies. For better profits, not all gaming platforms will open trading channels for their players. Under such a circumstance, if players want to obtain a specific game item or property, and there is no official way to achieve that, players will consider trading the game accounts. This seriously damages the interests of game companies. At the same time, private transactions between players cannot be guaranteed, and fraud may occur. However, if gaming platforms introduce the blockchain when important game props are tokenized as NFTs, there are several advantages:
1. Game props, as a virtual asset, can naturally be transferred into NFT; 2. The transaction of props can be circulated peer-to-peer in a trustless environment without time, place, and identity restrictions; 3. These game props' transactions are open and cannot be tampered with. Game developers do not need to maintain public transaction records and backups. Therefore, it is a reasonable attempt to add blockchain to the architecture of video games.

On the other hand, the business models of traditional games are mainly divided into pay-to-play and free-to-play. Pay-to-play requires players to pay money to obtain the qualifications to participate in the game or obtain the props in the game while free-to-play allows players to play games for free. However, free-to-play requires players to make efforts, such as participating in game activities (paying time and effort)
to get props or a more satisfying gaming experience. These two business models naturally oppose players and game companies into the roles of consumers and service providers, and the benefits of the two roles may conflict.
However, when the blockchain is embedded into the architecture of video games, players can freely trade NFTs. After the NFTs are sold to players, their attributes will not change, and they will no longer be controlled and maintained by game companies. On the one hand, any limited-version game props have a potential appreciation, which requires the game company to limit the quantity of the NFT released. On the other hand, once the NFT are sold to players, the game company loses the monopoly sales channel for game props right away that requires a moderate selling price of the NFT. Under such a circumstance, attracting more players into the community benefits the game company and the blockchain ecosystem. 
Therefore, the play-to-earn (P2E) business model was proposed to attract players to join.

\section{Description of NFT Games}
\label{sec:gameintro}
\textbf{Axie Infinity}~\cite{Axie} is a single-player Player vs. Player (PvP) game developed by Sky Mavis. The ERC-721 compatible NFTs named \texttt{Axies} are the Pokemon-like digital pets that players can control to battle with others. There are 2 ERC-20 compatible tokens in Axie Infinity. \texttt{AXS} is the govenance token and \texttt{SLP} is the utility token. Although it is not on Ethereum now, it used to be on Ethereum and now is on Ethereum-linked sidechain, Ronin.
\textbf{Benji Bananas}~\cite{benjibananas} is a mobile action game. In the game, the players control the action of Benji the monkey and his friends to let them jump from vine to vine through the jungle. The ERC-1155 NFT in Benji Bananas, named \texttt{Membership Pass}, represents qualification to participate in the Play-to-Earn ecosystem and receive any earning tokens. \texttt{BENJI} is the ERC-20 token and players would receive \texttt{BENJI} through play. \texttt{BENJI} tokens are exchangeable to other tokens developed by the same company, Animoca Brands\cite{animocabrands}.

\textbf{Blankos Block Party}~\cite{blankos} is a PC open-world multiplayer game where players can design their own digital vinyl toys and use it as the avatar of the owner when racing, shooting, and actions in the game world. The digital vinyl toys are named \texttt{Blankos}.

\textbf{CryptoKitties}~\cite{cryptokitties}, self-claimed as "the first blockchain game in the world"\cite{cryptokitties}, is a pet breed game where the players can collect and breed digital kitties. \texttt{Kitties} are ERC-721 NFTs which players can trade, breed, and interact (\eg, feed).

\textbf{Decentraland}~\cite{decentraland} is a 3D metaverse game founded in 2015 and launched on Ethereum in 2017. Every chunk of the virtual land in the metaverse is either a public place or land that can be sold as private land. The players can purchase the virtual land and build on their land. Players can also interact with each other in public places or on private land. The NFT in Decentraland, named \texttt{LAND}, represents the ownership of the virtual land. \texttt{MANA} is the ERC-20 token, the virtual currency to purchase \texttt{LAND} in Decentraland. The owner of either token is granted voting power based on the amount and the weight of each type of token.

\textbf{Ethermon}~\cite{ethermon}, originally named as "Etheremon" in 2017, which self-claimed as "the first PVP battle game on Ethereum"~\cite{ethermon}. It has the ERC-721 NFTs \texttt{Monsters} and ERC-20 token \texttt{EMON}. Players trade and control their \texttt{Monsters} to battle with \texttt{Monsters} of other players to earn \texttt{EMON} and other prizes. In 2019, the original team left and the top players took over the game and changed the name as "Ethermon".

\textbf{F1 Delta Time}~\cite{F1DeltaTime} was launched on 2019 and is reported as the first licensed NFT game. Players use Ethers to purchase the \texttt{REVV} utility token to trade the NFTs such as cars and drivers. F1 Delta Time has shut down in March, 2022 due to the loss of the license. 

\textbf{League of Kingdoms}~\cite{leagueofkingdoms} is a  massively multiplayer online (MMO)  strategy game. Players build their kingdom and fight with other kingdoms for dominion. There are multiple ERC-721 NFTs, \texttt{LAND} represents the virtual land in the game. Players build their kingdoms on their \texttt{LANDs} and the higher level of the \texttt{LANDs} is, the more profits players can earn. \texttt{Resource} is another collection of ERC-721 NFTs  representing four types of in-game resources: Food, Stone, Lumber, and Gold, respectively. \texttt{LOKA}, the ERC-20 utility token of League of Kingdoms, can be used to trade NFTs, representing the voting power inside the game, and as the one of the rewards that NFT owners may earn.

\textbf{My Crypto Heroes}~\cite{mycryptoheroes}, self-claimed as the blockchain game with the most daily active users as of 08/21/2019, is a role-playing game (RPG) where players can control the historically inspired heroes to complete quests and battle against other players.  \texttt{Hero} is the ERC-721 NFT that represents the historical inspired heroes that players control; the equipments on a hero belong to the \texttt{Extension} collection of ERC-721 NFT. \texttt{MCHC} is the governance token and \texttt{RAYS} is the utility token.

\textbf{Sorare}~\cite{sorare} is a sports card play game. Digital cards that represent the athletes are the ERC-721 NFTs in the \texttt{SOR} collection. Players trade the cards and form their own sports team to compete with others. Sorare has the partnership with over 300 sports clubs and leagues such as NBA and MLB.

\textbf{Spider Tanks}~\cite{spidertanks} is a PVP Brawler PC game developed by GAMEDIA. \texttt{Tanks} are ERC-1155 NFTs that can be traded, rented, and upgraded by the players and used for competition with other players. The ERC-20 token \texttt{SILK} can be used to buy \texttt{Tanks}.

\textbf{Sandbox}~\cite{sandbox}, like Decentraland, is a virtual world residing on Ethereum that players can build. The name was initially a 2D game's name for mobile devices and PC and then used by the NFT game when it was launched on November 2021. The NFT of Sandbox is \texttt{LANDS}, which represents the ownership of the physical spaces in the metaverse.

\section{Earn-without-payment\protect}
\label{sec:earn-wo-payment}
Today most NFT games also hold events and allow players to participate and get token rewards for free, which we call earn-without-payment. Specifically, there are three ways to earn without the investment of money: (1) participation and feedback,  (2) contributor,  and (3) virtual employee (metaverse). 
In these cases, players do not need to pay or invest money to earn, and the profits are exactly the incomes. However, the income may be the exchange for the player's time, effort, and labor. Table \ref{tbl:earn-1} lists the verified incentives (earnings without payment) in the 12 NFT games.

\textbf{Participation and feedback.} Players may get tokens when participating in in-game events, activities, and contests. For example,  Sandbox players may get rewards after completing the event requirements. Coinbase~\cite{coinbase} gives a \$ 3 in SAND to players who watched an introduction video of Sandbox and completed a quiz. Note that not all events provide rewards and the amount of the reward can not be estimated exactly, but generally is small. In addition, players who provide feedback \eg completing a survey of Decentraland~\cite{decentraland}, can earn \$5 in MANA. This is completely legit and without any risks. However, the amount of money is tiny.

\textbf{Contributor.} Decentralized autonomous organization (DAO)~\cite{dao} is the management structrue of the NFT games.  In order to be autonomous, the DAO of the NFT games typically owns a substantial amount of governance tokens. As the management structure, DAO also provides token rewards for players who contribute to the game and the community. For instance, the Decentraland~\cite{decentraland} sets up a \texttt{Community Grants} program~\cite{CommunityGrants} to reimburse the efforts of the players who contributed to the maintenance of the community or the platform and created high-quality content. There are six different tiers of rewards possible, from MANA tokens to \$240,000 USD. Sandbox also has a similar \texttt{the Ambassador Program}~\cite{TheSandboxAmbassadorProgram}, via which players can earn LAND or SAND tokens, exclusive ASSETs, and others. 
There is no doubt that contributors to NFT games require professional skills and only a few players are capable of obtaining such rewards. The majority of NFT game players may only get income through participating the events and activities instead. 
Both the opportunity of working inside the NFT games and the rewards of working as a contributor to the NFT games are limited, and there is no standard for such incomes. 

\textbf{Virtual employee (metaverse).} Only NFT games related to the metaverse recruit the players as the virtual employee (metaverse). News reported that casinos in Decentraland would like to hire players as staff~\cite{decentralandwork}. Under such a circumstance, players who get a job in the NFT game would get paid for their jobs.
For this kind of sutiation, the positions may be limited and the salary standard is opaque. In addition, the lack of related regulations and laws for such virtual positions puts the players at risk of labor abuse and fraud.

\emph{In summary, players can earn limited incomes in tokens without any money out-of-pocket in the NFT games, and the amount of income is expected to be limited and unpredictable. }

\begin{table*}
\small
\caption{Incentives in the NFT games}
\label{tbl:earn-1}
\begin{center}
\hspace*{-1.3cm}
\begin{tabular}{c|c|c|c|c|c}
NFT Game & Incentive Form & \makecell{Estimated \\ Profit (USD)} & Via & Frequency & Precondition \\\hline
BENJI \cite{benE} & Utility token & $0.0031$ & Signup & Non-repeatable & N/A \\
BENJI~\cite{benE} & Utility token & $0.0031$ & Referral & Unlimited & N/A \\
BENJI~\cite{benE} & Utility token & $0.104$ & Participation & Monthly & NFT holder\\
BLNKS~\cite{bbpE} & NFT & $2\sim5$ & Rewards & Every 60 days & Purchase\\
BLNKS~\cite{bbpE} & Utility token & $3$ & Participation & Every 60 days & N/A\\
CK~\cite{cryptokitties} & NFT & $24.8$ & Breed & Non-repeatable & NFT holder\\
EMONA~\cite{emonE} & Utility token & Up to $4.75$ & Competition & Once & NFT holder\\
EMONA~\cite{emonE} & ETH & $2308.15$ & Raffle & Once & NFT holder\\
LANDD~\cite{dlandE} & Utility token & Up to 10k & Competition & Once & NFT holder\\
LANDD~\cite{dlandE} & Utility token & Up to 5k & Competition & Once & NFT holder\\
LANDD~\cite{dlandE} & USD & $5$ & Participation &  Non-repeatable &N/A \\
LOK~\cite{lokE} & Utility token & $5\%$ & Land Ownership & N/A & NFT holder\\
LOK~\cite{lokE} & Utility token & Uncertain & Farming Resources & N/A & N/A\\
LOK~\cite{lokE} & Utility token & Uncertain & Participation & N/A & N/A\\
MCHH~\cite{mycryptoheroes} & NFT & Uncertain & Compete against others & N/A & NFT holder\\
MCHH~\cite{mycryptoheroes} & NFT & Uncertain & Join land & N/A & Purchase\\
MCHH~\cite{mycryptoheroes} & NFT & $10\%$ & Land ownership & N/A & NFT holder\\
MCHH~\cite{mycryptoheroes} & NFT & Uncertain & Quests & N/A & N/A\\
LANDS~\cite{sandbox} & Utility token & Uncertain & NFT Creation & N/A & N/A\\
LANDS~\cite{sandbox} & Utility token & Uncertain & Rent & N/A & NFT holder\\
LANDS~\cite{sandbox} & USD & $5$ & Participation & Non-repeatable & N/A\\
SOR~\cite{sorare} & Utility token & Uncertain & Competition & N/A & NFT holder\\
SOR~\cite{sorare} & NFT & $0.0238$ & Auction & Every 5 auctions & Purchase\\
SOR~\cite{sorare} & NFT & Uncertain & Referral & N/A & Referral\\
SOR~\cite{sorare} & Utility token & Uncertain & Trade & N/A & Trade\\
TANKS~\cite{spidertanks} & Utility token & Uncertain & Rent & N/A & NFT holder\\
TANKS~\cite{spidertanks} & Utility token & Uncertain & Winning games & N/A & Purchase\\
AXIE~\cite{Axie} & Utility token & Uncertain & Battles and quests & N/A & Purchase\\
AXIE~\cite{Axie} & NFT & $7.47$ & Lend & N/A & NFT holder\\
AXIE~\cite{Axie} & Utility token & Uncertain & Staking & N/A & token holder\\
\end{tabular}
\end{center}
\end{table*}

\section{Utility Token Trading, Staking, and Functioning}\hfill
\label{sec:utility}

\begin{figure}
\centering
\includegraphics[width=.9\linewidth]{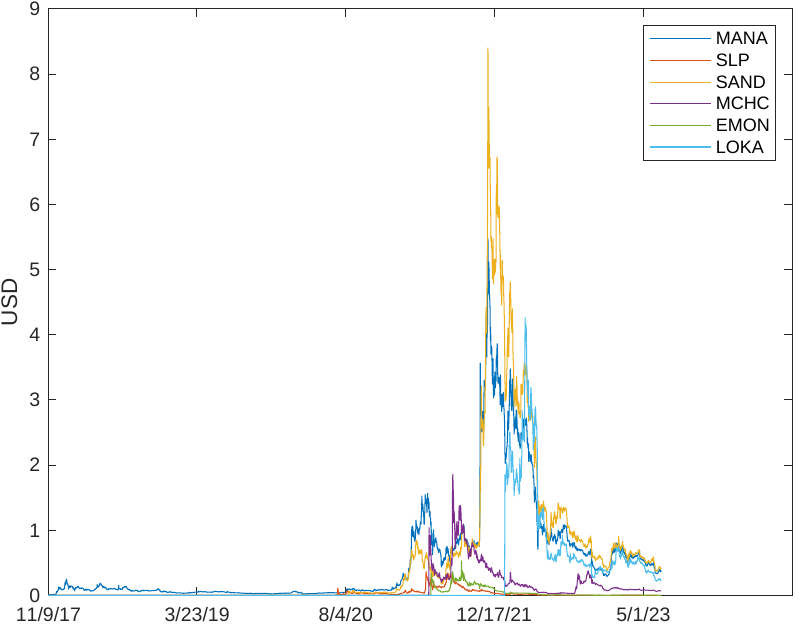}
\caption{The USD price trend of the utility tokens of the Benji Bananas (BENJI), Decentraland (MANA), Ethermon (EMON), My Crypto Heroes (MCHC), Sandbox (SAND), and Axie Infinity (SLP).}
\label{fig:ut_trend}
\end{figure}

\begin{figure}
\centering
\includegraphics[width=.9\linewidth]{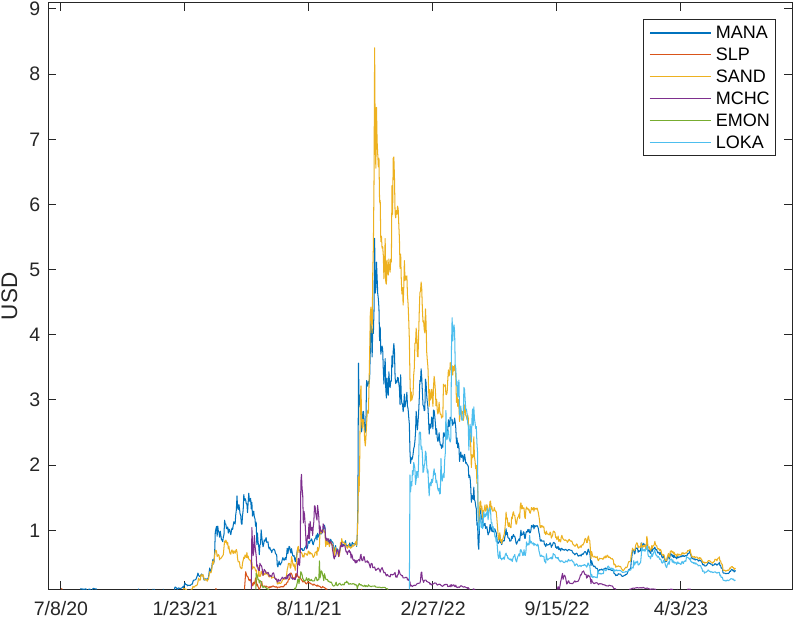}
\caption{The USD price trend of the utility tokens of the Benji Bananas (BENJI), Decentraland (MANA), Ethermon (EMON), My Crypto Heroes (MCHC), Sandbox (SAND), and Axie Infinity (SLP).}% from 7/8/20 to 6/30/23.}
\label{fig:ut_short}
\end{figure}

Besides property tokens, there are also utility tokens, including governence tokens, in most NFT games. These utility tokens can bring incomes for the owners via trading, staking, and functioning. 

\textbf{Utility Trading}
Utility tokens perform the functionality of currency inside the NFT games. Most utility tokens of NFT games are open to exchange with cryptocurrencies~\eg BitCoin~\cite{bitcoin2008}, Ether~\cite{ethereum} or currencies~\eg USD. 

We retrieve the daily exchange rate of the utility tokens of these games with USD from Yahoo! Finance~\cite{yahoo}, and the price trend is shown in Figure~\ref{fig:ut_trend} and Figure~\ref{fig:ut_short}.
It is clear that from the end of 2020 to 2022, 
the market of NFT was active when the utility tokens of different games showed a similar price trend. Starting in 2022, all games suffered from the jumped price. For example players who invest MANA before 02/10/2021 and SAND before 02/28/2021 have a chance to earn. The players who invest in SLP on 01/01/2023 can guarantee a positive return because the price flows up and down and the price on 01/01/2023 is the lowest until 03/21/2023.

\textbf{Utility Staking}
Players who are staking utility/governance tokens, especially on the crypto exchange platforms, for a certain period will get paid with interest in terms of the token she/he is staking. For example, the maximum APY found for staking the three utility tokens is 3.04\% (MANA), 14.5\% (SAND), and 2.12\% (SLP). However, Figure~\ref{fig:ut_trend} indicates % \bo{based on the figure~\ref{fig:ut_trend},} 
the price of utility tokens has dropped significantly since the end of 2021. The token interests may not cover the loss of the holders caused by the price dropping. Players can hardly earn this way. 

\textbf{Utility Functioning}
Similar to the NFT functioning, governance tokens are always bound to the voting power. In order to attract more players to participate in the management of the community, NFT games may provide rewards to players who vote. This part of earning also varies from game to game and thus we did not include in our study. 

\emph{%We can conclude that unless a player joined the game at a very early stage, the utility tokens purchased in the recent two years (since 2021) are highly likely to lead to loss to the holders~\ie players.} 
%\gyx{We can see that the price of the utility tokens had a growth period in the early summer, and the price would drop heavily starting from the fall. And the estimated profits for players who invested SLP after 07/21 and MANA or SAND after 05/2022 are negative.}}
%\csq{rephrase}
Overall, we observe that there are significant price fluctuations across these games, with a price uptake and then a downturn in a short period.  While many reasons may have contributed to this volatility, we found that players can hardly make profit via utility tokens.
}

\end{document}